\documentclass[10pt,a4paper]{article}

\usepackage{graphicx}      
\usepackage{cite}
\usepackage{graphicx}
\usepackage{algorithmic}
\usepackage{algorithm}
\usepackage{amssymb}  
\usepackage{amsmath} 
\usepackage{amsthm}
\usepackage{mathrsfs}
\usepackage{xspace}
\usepackage{hyperref}
\usepackage{booktabs}
\usepackage{multicol}

\newtheoremstyle{ex}
  { }
  { }
  { }
  { }
  {\bfseries}
  { }
  { }
  {\thmname{#1}\thmnumber{ #2}:\thmnote{ #3}}
\theoremstyle{ex}

\theoremstyle{remark}

\theoremstyle{plain}
\newtheorem{theorem}{Theorem}
\newtheorem{proposition}{Proposition}
\newtheorem{assumption}{Assumption}

\usepackage{abbreviations}
\usepackage{lindsten}

\newlength\pffigheight
\setlength\pffigheight{4.5cm}

\newcommand\priorx{\pi}
\newcommand\x{\mathbf{x}}
\renewcommand\i{\mathbf{i}}

\newcommand\pg{PG\@\xspace}
\newcommand\pgbsi{PG-BSi\@\xspace}
\newcommand\mwpg{MwPG\@\xspace}
\renewcommand\parspace{\Theta}
\newcommand\kbsi{\psi^{\prime,\parameter}}

\newcommand\exind[2]{\substack{#1=1 \\ #1\neq#2}}

\newcommand\sampletime{dt}

\begin{document}
\title{On the use of backward simulation in particle Markov chain Monte Carlo methods}

\author{Fredrik Lindsten and Thomas B. Sch{\"o}n\\
Division of Automatic Control\\
Link{\"o}ping University\\
SE--$581 83$ Link{\"o}ping, Sweden.\\
E-mail: \{lindsten, schon\}@isy.liu.se}
\maketitle

\begin{abstract}
  Recently, Andrieu, Doucet and Holenstein \cite{AndrieuDH:2010} introduced a general framework for
  using particle filters (PFs) to construct proposal
  kernels for Markov chain Monte Carlo (MCMC) methods. This framework, termed Particle Markov chain Monte Carlo (PMCMC),
  was shown to provide powerful methods for joint Bayesian state and parameter inference in nonlinear/non-Gaussian
  state-space models. However, the mixing of the resulting MCMC kernels can be quite sensitive, both to
  the number of particles used in the underlying PF and to the number of observations in the data.
  In the discussion following \cite{AndrieuDH:2010}, Whiteley \cite{Whiteley:2010} suggested
  a modified version of one of the PMCMC samplers, namely the particle Gibbs (\pg) sampler,
  and argued that this should improve its mixing. In this paper we explore the consequences of this modification
  and show that it leads to a method which is much more robust to a low number of particles as well as a large number of observations.
  Furthermore, we discuss how the modified \pg sampler can be used as a basis for alternatives to all three PMCMC samplers
  derived in \cite{AndrieuDH:2010}.  We evaluate these methods on several challenging inference problems in a simulation study.
  One of these is the identification of an epidemiological model for predicting
  influenza epidemics, based on search engine query data.
\end{abstract}

\section{Introduction}\label{sec:intro}%
Particle Markov chain Monte Carlo (PMCMC) is an umbrella for a collection of  methods introduced in \cite{AndrieuDH:2010}.
The fundamental idea underlying these methods, is to use a sequential Monte Carlo (SMC) sampler, \ie a particle
filter (PF), to construct a proposal kernel for an MCMC sampler. The resulting methods were shown to be powerful tools
for joint Bayesian parameter and state inference in nonlinear, non-Gaussian state-space models.
However, to obtain reasonable mixing of the resulting MCMC kernels, it was reported
that a fairly high number of particles was required in the underlying SMC samplers. This might seem like a very natural observation;
in order to obtain a good proposal kernel based on a PF, we should intuitively use sufficiently many particles to obtain
high accuracy in the PF. However, this is also one of the major criticisms against PMCMC. If we need to run a PF with a high number of particles
at \emph{each iteration} of the PMCMC sampler, then a lot of computational resources will be wasted and
the method will be very computationally intense (or ``computationally brutal'' as Flury and
Shephard put it in the discussion following \cite{AndrieuDH:2010}).
It is the purpose of this work to show that this need not be the case. We will discuss alternatives to each of the
three PMCMC methods derived in \cite{AndrieuDH:2010}, that will function properly even when using very few particles.
The basic idea underlying these modified PMCMC samplers was originally proposed by Whiteley \cite{Whiteley:2010}.

To formulate the problem that we are concerned with,
consider a general, discrete-time state-space model with state-space $\setX$, parameterised by $\parameter \in \parspace$,
\begin{subequations}
  \label{eq:intro_ssm}
  \begin{align}
    x_{t+1} &\sim f_\parameter(x_{t+1} \mid x_t), \\
    y_t &\sim g_\parameter(y_t \mid x_t).
  \end{align}
\end{subequations}
The initial state has density $\priorx_\parameter(x_1)$.
We take a Bayesian viewpoint and model the parameter as a random variable with prior density $p(\parameter)$.
Given a sequence of observations $y_{1:\T} \triangleq \crange{y_1}{y_\T}$, we wish to estimate the parameter $\parameter$ as well as
the system state $x_{1:\T}$. That is, we seek the posterior density $p(\parameter, x_{1:\T} \mid y_{1:\T})$.

During the last two decades or so, Monte Carlo methods for state and parameter inference in this type of nonlinear, non-Gaussian state-space models
have appeared at an increasing rate and with increasingly better performance. State inference via SMC is thoroughly treated by for instance
\cite{Gustafsson:2010a,DoucetJ:2011,CappeMR:2005}.
For the problem of parameter inference, the two overview papers
\cite{AndrieuDST:2004,KantasDSM:2009} provide a good introduction to both frequentistic and Bayesian methods.
Some recent results on Monte Carlo approaches to maximum likelihood parameter inference in state-space models can be found in
\cite{SchonWN:2011,OlssonDCM:2008,Gopaluni:2008,PoyiadjisDS:2011}.
Existing methods based on PMCMC will be discussed in the next section,
where we also provide a preview of the material presented in the present work.

\section{An overview of PMCMC methods}\label{sec:preview}
In \cite{AndrieuDH:2010}, three PMCMC methods were introduced to address the inference problem mentioned above. These methods
are referred to as particle Gibbs (\pg), particle marginal Metropolis-Hastings (PMMH) and particle independent Metropolis-Hastings (PIMH).

Let us start by considering the \pg sampler, which is an MCMC
 method targeting the joint density $p(\parameter, x_{1:\T} \mid y_{1:\T})$.
In an ``idealised'' Gibbs sampler (see \eg \cite{Liu:2001} for an introduction to Gibbs sampling),
we would target this density by the following two-step sweep,
\begin{itemize}
  \item Draw $\theta^\star \mid x_{1:\T} \sim p(\theta \mid x_{1:\T}, y_{1:\T})$.
  \item Draw $x_{1:\T}^\star \mid \theta^\star \sim p_{\theta^\star}(x_{1:\T} \mid y_{1:\T})$.
\end{itemize}
The first step of this procedure can, for some problems be carried out exactly
(basically if we use conjugate priors for the model under study). In the \pg sampler, this is assumed to be the case.
However, the second step, \ie to sample from the joint smoothing density
$p_{\theta}(x_{1:\T} \mid y_{1:\T})$, is in most cases very difficult. In the \pg sampler,
this is addressed by instead sampling a particle trajectory $x_{1:\T}^\star$ using a PF. More precisely, we
run a PF targeting the joint smoothing density. We then sample one of the particles at the final time $\T$, according to their
importance weights, and trace the ancestral lineage of this particle to obtain the trajectory $x_{1:\T}^\star$.
However, as we shall see in \Sec{pgbsi_example} this can lead to very poor mixing when there is degeneracy in the PF (see \eg \cite{GodsillDW:2004,DoucetJ:2011}
for a discussion on PF degeneracy). This will inevitably be the case when $\T$ is large and/or $\Np$ is small.

One way to remedy this problem is to append a backward simulator to the \pg sampler,
leading to a method that we denote particle Gibbs with backward simulation (\pgbsi).
The idea of using backward simulation in the \pg context was mentioned by Whiteley \cite{Whiteley:2010} in the
discussion following \cite{AndrieuDH:2010}. To make this paper self contained, we will present a
derivation of the \pgbsi sampler in \Sec{pgbsi}.
The reason for why \pgbsi can avoid the poor mixing of the original \pg sampler will be discussed in \Sec{pgbsi_discussion}.
Furthermore, in \Sec{pgbsi_example} we shall see that the \pgbsi sampler can operate properly even with very few particles and vastly outperform \pg
in such cases.

Now, as mentioned above, to apply the \pg and the \pgbsi samplers we need to sample from the conditional parameter density
$p(\parameter \mid x_{1:\T}, y_{1:\T})$. This is not always possible, since conjugate priors can be both
 undesirable and unavailable (\eg if there is a nonlinear parameter dependence in the model).
For such cases, the PMMH of \cite{AndrieuDH:2010} can be used. Similarly to \pg, this method targets the joint posterior
density $p(\parameter, x_{1:\T} \mid y_{1:\T})$, but parameter values are drawn from some proposal kernel
$q(\parameter \mid \parameter^\prime)$ in a Metropolis-Hastings (MH) fashion. Hence, the method does not require
the density $p(\theta \mid x_{1:\T}, y_{1:\T})$ to be available for sampling.

In this work, we propose an alternative method called Metropolis within particle Gibbs (\mwpg), derived in \Sec{mwpg}.
The \mwpg sampler is based on the \pgbsi sampler,
but instead of sampling exactly from $p(\theta \mid x_{1:\T}, y_{1:\T})$
we use an MH step to sample the parameters. This might at first seem superfluous and appear as
a poor imitation of the PMMH sampler. However, as we shall see in \Sec{mwpg_example}, the \mwpg sampler can
vastly outperform PMMH, even when the two methods use the exact same proposal density to sample the parameters.
The reason is that PMMH requires accurate estimates of the likelihood $p_\parameter(y_{1:\T})$
to get a high acceptance probability \cite{AndrieuDH:2010}. This implies that also this method requires a high number of particles in the PF.
On the contrary, as mentioned above, the \pgbsi sampler has the ability to function properly even for a very low number of particles.
Since they share a common base, this desirable property of the \pgbsi sampler will ``rub off'' on the \mwpg sampler.

Finally, the PIMH introduced in \cite{AndrieuDH:2010} is as a method targeting the joint smoothing
density $p_\parameter(x_{1:\T} \mid y_{1:\T})$ when the parameter $\parameter$ is assumed to be known. Hence, it is a competitor
to existing particle smoothing algorithms, see \eg \cite{GodsillDW:2004,BriersDM:2010,DoucGMO:2011,FearnheadWT:2010}.
The PIMH can be seen as a collapsed version of the PMMH, and will suffer
from the same problem when we decrease the number of particles. To mitigate this, we propose to instead use a collapsed version of the
\pgbsi sampler. The details are given in \Sec{smoothing} and the proposed method is compared with the PIMH and a state of the art particle smoother
\cite{DoucGMO:2011} in \Sec{smoothing_example}.

Since the publication of the seminal paper \cite{AndrieuDH:2010}, several contributions have been made which relate to the present work.
We have published a preliminary version of the \pgbsi sampler in \cite{LindstenS:2012}. We have also applied this to the
problem of Wiener system identification in \cite{LindstenSJ:2012}.
In \cite{WhiteleyAD:2010}, a version of \pg with backward simulation is used for inference in switching linear dynamical systems.
In this work, it was observed that the backward sampling step could indeed drastically improve the performance of the \pg sampler.
In \cite{WhiteleyAD:2011} a similar approach is used for multiple change-point models.

Backward simulation has also been used in the context of PIMH and PMMH \cite{OlssonR:2011}.
However, it should be noted that their approach is fundamentally different from the \mwpg sampler suggested in this work.
In \cite{OlssonR:2011}, the aim is to ``improve the quality'' of the sampled state trajectories. The acceptance probabilities
in the MH samplers suggested in \cite{OlssonR:2011} are identical to those used in the original PMMH and PIMH samplers. Hence,
they will suffer from the same problems when using few particles.

\section{Particle filter and backward simulator}\label{sec:pfandbsi}%
Before we go in to the details of the \pgbsi method in \Sec{pgbsi}, we review the PF and the backward simulator. Throughout this
section we assume that the parameter $\parameter$ is fixed and discuss how to approximately sample from the joint smoothing density
$p_\parameter(x_{1:\T} \mid y_{1:\T})$.

\subsection{Auxiliary particle filter}
The auxiliary particle filter (APF) \cite{PittS:1999} can be used to approximate (and approximately sample from) the joint
smoothing distribution. The reason for why we choose an APF, rather than a ``standard'' PF, is that it is more general
and that the notion of auxiliary variables to determine ancestor indices is well suited for the derivation of the \pgbsi sampler in \Sec{pgbsi}.

We shall assume that the reader is familiar with the APF and make the presentation quite brief, mainly just to introduce the
required notation. For readers not familiar with the PF or the APF, we refer to \cite{Gustafsson:2010a,DoucetJ:2011,PittS:1999}.
Let $\{x_{1:t-1}^m, w_{t-1}^m\}_{m=1}^\Np$ be a \wps targeting $p_\parameter(x_{1:t-1} \mid y_{1:t-1})$. That is, the particle system
defines an empirical distribution,
\begin{align*}
  \widehat p_\parameter^\Np (dx_{1:t-1} \mid y_{1:t-1}) \triangleq \sum_{m=1}^\Np \frac{w_{t-1}^m}{\normsum{l} w_{t-1}^l} \delta_{x_{1:t-1}^m }(dx_{1:t-1}),
\end{align*}
which approximates the target distribution. In the APF, we propagate this sample to time $t$ by sampling from a proposal kernel,
\begin{align*}
  M_t^\parameter(i_t, x_t)
  \triangleq \frac{w_{t-1}^{i_t} \nu_{t-1}^{i_t} }{\normsum{l} w_{t-1}^l \nu_{t-1}^l}  R_t^\parameter(x_t \mid x_{t-1}^{i_t}).
\end{align*}
Here, the variable $i_t$ is the index to an ``ancestor particle'' $x_{t-1}^{i_t}$ and $R_t^\parameter$ is a proposal kernel which proposes a new particle
at time $t$ given this ancestor. The factors $\nu_{t-1}^{i_t} = \nu_{t-1}^\parameter(x_{t-1}^{i_t}, y_{t})$, known as adjustment multiplier weights, are used in the APF to increase the probability of
sampling ancestors that better can describe the current observation.
If these adjustment weights are identically equal to 1, we recover the standard PF.

Once we have generated $\Np$ new particles (and ancestor indices) from the kernel $M_t^\parameter$, the particles are assigned importance weights
according to $ w_{t}^m = W_{t}^\parameter(x_t^m, x_{t-1}^{i_t^m})$
for $m = \range{1}{\Np}$, where the weight function is given by,
\begin{align}
  \label{eq:pf_W}
  W_{t}^\parameter(x_t, x_{t-1}) \triangleq \frac{g_\parameter(y_t \mid x_t) f_\parameter(x_t \mid x_{t-1}) }{ \nu_{t-1}^\parameter(x_{t-1}, y_t) R_t^\parameter(x_t \mid x_{t-1})}.
\end{align}
This results in a new \wps $\{x_{1:t}^m, w_{t}^m\}_{m=1}^\Np$, targeting the joint smoothing density at time $t$.

The method is initialised by sampling from a proposal density $x_1^m \sim R_1^\parameter(x_1)$ and weighting the samples according to
$w_1^m = W_1^\parameter(x_1^m)$ where the weight function is given by,
$W_{1}^\parameter(x_1) \triangleq g_\parameter(y_1 \mid x_1) \priorx_\parameter(x_1) / R_1^\parameter(x_1)$.

After a complete run of the APF we can sample a trajectory from the empirical joint smoothing density by choosing
particle $x_{1:\T}^m$ with probability proportional to $w_\T^m$. Note that this particle trajectory consists of the
ancestral lineage of $x_\T^m$. Hence, if this lineage is defined by the indices $b_{1:\T}$ (with $b_\T = m$) we
obtain a sample
\begin{align}
  \label{eq:pf_sampletraj}
  x_{1:\T}^\star = x_{1:\T}^m = \crange{x_1^{b_1}}{x_\T^{b_\T}}.
\end{align}

For later reference, we note that the proposal kernel used in the APF is not allowed to be completely arbitrary.
In principle, we need to make sure that the support of the proposal covers the support of the target. Hence, we make
the following assumption.

\begin{assumption}\label{assumption:pf}%
  Let $\mathcal{S} = \{\parameter \in \parspace : p(\parameter) > 0\}$. Then, for any $\parameter \in \mathcal{S}$ and any $t \in \crange{1}{\T}$,
  $\mathcal{S}_t^\parameter \subset \mathcal{Q}_t^\parameter$ where we have defined,
  \begin{align*}
    \mathcal{S}_t^\parameter &= \{x_{1:t} \in \setX^t : p_\parameter(x_{1:t} \mid y_{1:t}) > 0 \}, \\
    \mathcal{Q}_t^\parameter &= \{x_{1:t} \in \setX^t : R_t^\parameter(x_t \mid x_{t-1}) p_\parameter(x_{1:t-1} \mid y_{1:t-1}) > 0 \}.
  \end{align*}
\end{assumption}

\subsection{Backward simulation}
In the original \pg sampler by \cite{AndrieuDH:2010}, sample trajectories are generated as in \eqref{eq:pf_sampletraj}.
However, due to the degeneracy of the APF, the resulting Gibbs sampler can suffer from poor mixing. We will illustrate this in \Sec{pgbsi_example}.
As previously mentioned, we will focus on an alternative way to generate an approximate sample trajectory from the joint smoothing density, known as backward
simulation. It was introduced as a particle smoother in \cite{DoucetGW:2000,GodsillDW:2004}.
In \cite{Whiteley:2010}, it was suggested as a possible way to increase the mixing rate of a \pg kernel.

Consider a factorisation of the joint smoothing density as, 
\begin{align*}
   p_\parameter(x_{1:\T} \mid y_{1:\T}) = p_\parameter(x_{\T} \mid y_{1:\T}) \prod_{t=1}^{\T-1} p_\parameter(x_t \mid x_{t+1}, y_{1:t}).
\end{align*}
Here, the \emph{backward kernel} density
can be written as,
\begin{align*}
  p_\parameter(x_t \mid x_{t+1}, y_{1:t}) = \frac{f_\parameter(x_{t+1} \mid x_t) p_\parameter(x_{t} \mid y_{1:t})}{p_\parameter(x_{t+1} \mid y_{1:t})}.
\end{align*}
We note that the backward kernel depends on the filtering density $p_\parameter(x_t \mid y_{1:t})$.
The key enabler of the backward simulator
is that this density (in many cases) can be readily approximated
by an (A)PF, without suffering from degeneracy. By using the APF to approximate the filtering density,
we obtain an approximation of the backward kernel,
\begin{align*}
  \widehat p_\parameter^\Np(dx_t \mid x_{t+1}, y_{1:t})
  \triangleq \sum_{m=1}^\Np \frac{w_t^m f_\parameter(x_{t+1} \mid x_t^m)}{\normsum{l} w_t^l f_\parameter(x_{t+1} \mid x_t^l)}
  \delta_{x_t^m}(dx_t).
\end{align*}
Based on this approximation, we may sample a particle trajectory backward in time, approximately distributed
according to the joint smoothing density. The procedure is given in \Alg{bsi}. The algorithm generates a set of indices $j_{1:\T}$ defining
a backward trajectory according to
\begin{align}
  \label{eq:bsi_sampletraj}
  x_{1:\T}^\star = \crange{x_t^{j_1}}{x_\T^{j_\T}}.
\end{align}
Note that, since we only need to generate a single backward trajectory,
the computational cost of the backward simulator is $\Ordo(\Np)$, \ie of the same order as the APF.
Hence, the computational complexity of the \pg sampler is more or less unaffected by the introduction of the backward simulation step.
Still, it is possible to reduce the complexity of the backward simulator even further by making use of rejection sampling
\cite{DoucGMO:2011}.

\begin{algorithm}[ptb]
  \caption{Backward simulator}
  \label{alg:bsi}
  1. \textbf{Initialise:} Set $j_\T = m$ with probability $w_{\T}^{m} / \normsum{l} w_{\T}^l $.
  
  \noindent
  2. \textbf{For $t = \bwdrange{\T-1}{1}$ do:}
  \begin{enumerate}
  \item[(a)] Given $x_{t+1}^{j_{t+1}}$, compute the smoothing weights,
    \begin{align*}
      w_{t\mid \T}^m = \frac{ w_t^{m} f_\parameter(x_{t+1}^{j_{t+1}} \mid x_t^m) }
      { \normsum{l} w_t^{l} f_\parameter(x_{t+1}^{j_{t+1}} \mid x_t^l) }, \qquad m = \range{1}{\Np}.
    \end{align*}
  \item[(b)] Set $j_t = m$ with probability $w_{t\mid \T}^m$.
    \end{enumerate}
\end{algorithm}

\section{Particle Gibbs with backward simulation}\label{sec:pgbsi}%
In this section we will derive the \pgbsi sampler and discuss its properties.
This forms the basis for all the methods proposed in this work.

\subsection{The \pgbsi sampler}
Recall the idealised Gibbs sampler, briefly outlined in \Sec{preview}. The two steps of the method which are performed at
each iteration are,
\begin{subequations}
  \begin{align}
    \theta^\star \mid x_{1:\T} &\sim p(\theta \mid x_{1:\T}, y_{1:\T}), \\
    \label{eq:pg_idealgibbs_b}
    x_{1:\T}^\star \mid \theta^\star &\sim p_{\theta^\star}(x_{1:\T} \mid y_{1:\T}).
  \end{align}
\end{subequations}
As previously mentioned, the idea behind the \pg sampler by \cite{AndrieuDH:2010} is,
instead of sampling according to \eqref{eq:pg_idealgibbs_b}, to run an APF
to generate a sample trajectory $x_{1:\T}^\star$ as in \eqref{eq:pf_sampletraj}.
In the \pgbsi sampler, we instead generate a sample trajectory
by backward simulation, as in \eqref{eq:bsi_sampletraj}.
However, in either case, to simply replace step \eqref{eq:pg_idealgibbs_b} of the idealised Gibbs sampler
with such an approximate sampling procedure will not result in a valid approach.
In \cite{AndrieuDH:2010} they solve this by replacing the APF with a so called conditional APF (CAPF),
and show that the resulting \pg sampler is a valid MCMC method. In this section,
we make a similar derivation for the \pgbsi sampler.

The key in deriving the \pgbsi sampler is to consider a Markov chain in which the state consists of
all the random variables generated by the particle filter and the backward simulator. Hence, the state of the Markov chain
will not only be the particle trajectory $\range{x_1^{j_1}}{x_\T^{j_\T}}$, but the complete set of particles, particle indices
and backward trajectory indices. For this cause, let us start by determining the density of
all the random variables generated by the APF. Let $\x_{t} = \{\range{x_t^1}{x_t^\Np}\}$ and
$\i_{t} = \{\range{i_t^1}{i_t^\Np}\}$ be the particles and ancestor indices, respectively, generated by the APF at time $t$.
Note that $i_t^m$ is (the index of) the ancestor of particle $x_t^m$. For a fixed $\parameter$ the APF then samples from,
\begin{align*}
  \psi^\parameter(\x_{1:\T}, \i_{2:\T}) = \prod_{l = 1}^\Np R_1^\parameter(x_1^l) \prod_{t = 2}^\T
  \prod_{m = 1}^\Np
  M_t^\parameter(i_t^m, x_t^m).
\end{align*}
Note that the APF is simply a method which generates a single sample from $\psi^\parameter$,
on the extended space $\setX^{\Np\T} \times \crange{1}{\Np}^{\Np(\T-1)}$. It can be beneficial to adopt this view on the APF in order to
understand the inner workings of the PMCMC samplers.

By completing this with a backward simulation according to \Alg{bsi}, we generate the indices $\range{j_1}{j_\T}$ defining
a backward trajectory. The joint density of $\{\x_{1:\T}, \i_{2:\T}, j_{1:\T}\}$ is then given by,
\begin{align}
  \label{eq:pg_psiplus}
  \kbsi(\x_{1:\T}, \i_{2:\T}, j_{1:\T}) = \psi^\parameter(\x_{1:\T}, \i_{2:\T})
  \times \frac{w_\T^{j_\T}}{\normsum{l} w_\T^l} \prod_{t = 2}^\T \frac{w_{t-1}^{j_{t-1}} f_\parameter(x_{t}^{j_t} \mid x_{t-1}^{j_{t-1}})}{\normsum{l}  w_{t-1}^l f_\parameter(x_t^{j_t} \mid x_{t-1}^l)}.
\end{align}
Thus, the state trajectory $x_{1:\T}^{j_{1:\T}} \triangleq \crange{x_1^{j_1}}{x_\T^{j_\T}}$
will be distributed (jointly with the random indices $j_{1:\T}$) according to a marginal of \eqref{eq:pg_psiplus},
$\kbsi( x_{1:\T}^{j_{1:\T}}, j_{1:\T})$, which in general is not equal to the joint smoothing density.
In other words, the extracted trajectory $x_{1:\T}^{j_{1:\T}}$ is not distributed according to $p_\parameter( x_{1:\T} \mid y_{1:\T})$
and to simply use this trajectory in step \eqref{eq:pg_idealgibbs_b} of the idealised Gibbs sampler would not be a valid approach.

The key to circumventing this problem is to introduce a new, artificial target density, with two properties:
\begin{enumerate}
\item The artificial target density should admit the joint smoothing density as one of its marginals.
\item The artificial target density should be ``similar in nature'' to $\psi^\parameter$, to enable the application of an APF in the sampling procedure.
\end{enumerate}
Here, we define this artificial target density according to,
\begin{align}
  \nonumber
  \phi(\parameter, \x_{1:\T}, \i_{2:\T}, j_{1:\T}) &\triangleq \frac{p(\parameter, x_{1:\T}^{j_{1:\T}} \mid y_{1:\T})}{\Np^\T} \\
  \label{eq:phidef}
  &\hspace{-1cm}\times \frac{\psi^\parameter(\x_{1:\T},
    \i_{2:\T})}{R_1^\parameter(x_1^{j_1}) \prod_{t=2}^\T
    M_t^\parameter( i_t^{j_t}, x_t^{j_t} )}
  \prod_{t=2}^\T \frac{w_{t-1}^{i_t^{j_t}} f_\parameter(x_t^{j_t}
    \mid x_{t-1}^{i_t^{j_t}})}{ \normsum{l} w_{t-1}^l
    f_\parameter(x_t^{j_t} \mid x_{t-1}^l)}.
\end{align}
This construction is due to \cite{OlssonR:2011}, who apply backward simulation in the context of PMMH and PIMH.
The artificial target density \eqref{eq:phidef} differs from the one used in \cite{AndrieuDH:2010} in the last factor.

One of the important properties of the density $\phi$ is formalised in the following proposition.

\begin{proposition}\label{prop:pgbsi_marginal}%
  The marginal density of $\{\parameter, x_{1:\T}^{j_{1:\T}}, j_{1:\T}\}$ under $\phi$ is given by,
  \begin{align*}
    \phi(\parameter, x_{1:\T}^{j_{1:\T}}, j_{1:\T}) = \frac{p(\parameter, x_{1:\T}^{j_{1:\T}} \mid y_{1:\T})}{\Np^\T}.
  \end{align*}
  Furthermore, with $\{\parameter, \x_{1:\T}, \i_{2:\T}, j_{1:\T}\}$ being a sample distributed according to $\phi$, the density of
  $\{\parameter, x_{1:\T}^{j_{1:\T}} \}$ is given by $p(\parameter, x_{1:\T}^{j_{1:\T}} \mid y_{1:\T})$.
\end{proposition}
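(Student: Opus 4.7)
The plan is to integrate $\phi$ with respect to every random variable not appearing in the target marginal, exploiting the way \eqref{eq:phidef} has been engineered so that the off-trajectory quantities decouple from the on-trajectory ones. Specifically, I need to marginalise out $\i_{2:\T}$ in full and the off-trajectory particles $\{x_t^m : m\neq j_t\}$ at each $t$.

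First I would substitute the explicit form of $\psi^\parameter$ into \eqref{eq:phidef}; the denominator $R_1^\parameter(x_1^{j_1}) \prod_{t=2}^\T M_t^\parameter(i_t^{j_t}, x_t^{j_t})$ cancels the matching on-trajectory factors in $\psi^\parameter$, yielding
\begin{align*}
\phi &= \frac{p(\parameter, x_{1:\T}^{j_{1:\T}} \mid y_{1:\T})}{\Np^\T} \prod_{l \neq j_1} R_1^\parameter(x_1^l) \prod_{t=2}^\T \prod_{m \neq j_t} M_t^\parameter(i_t^m, x_t^m) \\
&\quad \times \prod_{t=2}^\T \frac{w_{t-1}^{i_t^{j_t}} f_\parameter(x_t^{j_t} \mid x_{t-1}^{i_t^{j_t}})}{\normsum{l} w_{t-1}^l f_\parameter(x_t^{j_t} \mid x_{t-1}^l)}.
\end{align*}
Next, I would sum out each on-trajectory ancestor index $i_t^{j_t}$. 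After the cancellation these indices appear only in the final product, and since its denominator does not involve $i_t^{j_t}$, summing from $1$ to $\Np$ yields exactly $1$, so the last product disappears entirely.

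What then remains, besides the target density $p(\parameter, x_{1:\T}^{j_{1:\T}} \mid y_{1:\T}) / \Np^\T$, is a product of proper proposal densities/pmfs $R_1^\parameter(x_1^l)$ for $l \neq j_1$ and $M_t^\parameter(i_t^m, x_t^m)$ for $m \neq j_t$. Each of these integrates (or sums) to $1$ in its own variables for any fixed configuration of earlier particles and weights, so marginalising them all returns unity and leaves precisely $p(\parameter, x_{1:\T}^{j_{1:\T}} \mid y_{1:\T}) / \Np^\T$, as claimed. The second assertion is immediate: summing the marginal over the $\Np^\T$ possible values of $j_{1:\T}$ cancels the $\Np^{-\T}$ factor, giving $p(\parameter, x_{1:\T}^{j_{1:\T}} \mid y_{1:\T})$.

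The main delicacy is ordering. The final factor of $\phi$ depends, via its normalising sum, on \emph{all} particles $x_{t-1}^l$ at time $t-1$, not just on $x_{t-1}^{j_{t-1}}$, so one cannot naively integrate the off-trajectory particles out first. Performing the $i_t^{j_t}$ summation before the off-trajectory marginalisation sidesteps this coupling entirely, since the sum-to-unity identity exploited there is pointwise in the full particle configuration at time $t-1$.
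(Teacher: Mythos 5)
Your overall plan (cancel the on-trajectory proposal factors, then eliminate everything else via sum-to-unity identities) is the right idea, and since the paper itself gives no direct proof of this proposition (it only cites Olsson and Ryd\'en, Theorem~1), a self-contained argument is welcome. However, there is a genuine gap in your key step. You claim that after cancellation the on-trajectory ancestor indices $i_t^{j_t}$ ``appear only in the final product'', so that summing them all out first makes that product vanish. This is false for $t<\T$: the weight of the retained particle, $w_t^{j_t} = W_t^\parameter(x_t^{j_t}, x_{t-1}^{i_t^{j_t}})$, is itself a function of $i_t^{j_t}$, and this weight enters the time-$(t+1)$ factors --- both the off-trajectory proposal kernels $M_{t+1}^\parameter(i_{t+1}^m, x_{t+1}^m)$ (through the normalisation $\normsum{l} w_t^l \nu_t^l$, and through the numerator whenever $i_{t+1}^m = j_t$) and the backward-weight factor at time $t+1$ (through $\normsum{l} w_t^l f_\parameter(x_{t+1}^{j_{t+1}} \mid x_t^l)$, and its numerator when $i_{t+1}^{j_{t+1}} = j_t$). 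Consequently the sum over $i_t^{j_t}$ does not reduce to $\sum_k w_{t-1}^k f_\parameter(x_t^{j_t}\mid x_{t-1}^k)\big/\normsum{l} w_{t-1}^l f_\parameter(x_t^{j_t}\mid x_{t-1}^l) = 1$, because factors depending on $i_t^{j_t}$ through $w_t^{j_t}$ cannot be pulled out of it. (This is precisely the dependence that forces the weight identity \eqref{eq:proof_W} in the paper's proof of \Prop{pgbsi_bsi}.) The same remark applies to your last step: the off-trajectory particles and indices at time $t$ also appear in the time-$(t+1)$ factors, so ``each proposal factor integrates to one'' does not by itself justify integrating them out independently. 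Your closing paragraph on ordering identifies a coupling, but the wrong one: the binding constraint is temporal, not the within-time-step interplay between the backward denominator and the off-trajectory particles.

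The repair is to eliminate variables backwards in time. Write the post-cancellation integrand as $\prod_{l\neq j_1} R_1^\parameter(x_1^l)\,\prod_{t=2}^\T A_t$, where $A_t$ collects the off-trajectory factors $M_t^\parameter(i_t^m,x_t^m)$, $m\neq j_t$, together with the backward factor at time $t$; then $A_t$ depends only on variables at times $\leq t$. Integrating first over all time-$\T$ variables hits only $A_\T$, and $\int A_\T = 1$ (each $M_\T^\parameter$ is a normalised kernel in its own pair, and the $i_\T^{j_\T}$-sum telescopes against the denominator, all conditionally on the time-$(\T-1)$ configuration); recursing down to $t=2$ and finally over the $R_1^\parameter$ factors leaves exactly $p(\parameter, x_{1:\T}^{j_{1:\T}}\mid y_{1:\T})/\Np^\T$. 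With that reordering your argument, including the final summation over the $\Np^\T$ values of $j_{1:\T}$ for the second assertion, goes through.
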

\begin{proof}
  See \cite[Theorem~1]{OlssonR:2011}.
\end{proof}

This proposition has the following important implication;
if we can construct an MCMC method with stationary distribution $\phi$,
then the sub-chain given by the variables $\{\parameter, x_{1:\T}^{j_{1:\T}} \}$ will have stationary distribution
$p(\parameter, x_{1:\T} \mid y_{1:\T})$.

The construction of such an MCMC sampler is enabled by the second important
property of our artificial density, namely that it is constructed in such a way that we are able to sample from
the conditional $\phi(\x_{1:\T}^{-j_{1:\T}}, \i_{2:\T} \mid \parameter, x_{1:\T}^{j_{1:\T}}, j_{1:\T})$.
Here we have introduced the notation $\x_{1:\T}^{-j_{1:\T}} = \crange{\x_1^{-j_1}}{\x_\T^{-j_\T}}$ and
$\x_t^{-j_t} = \{\range{x_t^1}{x_t^{(j_t)-1}},\,\range{x_t^{(j_t)+1}}{x_t^\Np}\}$.
This can be done by running a so called conditional APF (CAPF), introduced in \cite{AndrieuDH:2010}\footnote{
  The CAPF used in this work is slightly different from the conditional SMC introduced in \cite{AndrieuDH:2010}.
  The difference lies in step 2(b) -- where we sample new ancestor indices, they simply set $i_t^{j_t} = j_{t+1}$. The
  reason for this difference is that the indices $j_{1:\T}$ have different interpretations in the two methods. Here
  they correspond to backward trajectory indices, whereas in the original \pg sampler they correspond to an ancestral path.}.
  This method is given in \Alg{capf} and its relationship to $\phi$ is formalised in \Prop{pgbsi_capf}.

\begin{algorithm}[ptb]
  \caption{CAPF -- conditional auxiliary particle filter (conditioned on $\{x_{1:\T}^\prime, j_{1:\T}\}$)}
  \label{alg:capf}
  1. \textbf{Initialise:}
  \begin{enumerate}
  \item[(a)] Sample $x_1^m \sim R_1^\parameter(x_1)$ for $m \neq j_1$ and set $x_1^{j_1} = x_1^\prime$.
  \item[(b)] Set $w_1^m = W_1^\parameter(x_1^m)$ for $m = \range{1}{\Np}$.
  \end{enumerate}
  \noindent
  2. \textbf{For $t = \range{2}{\T}$ do:}
  \begin{enumerate}
  \item[(a)] Sample $\{i_t^m, x_{t}^m\} \sim  M_t^\parameter(i_t, x_t)$ for $m \neq j_t$ and set $x_t^{j_t} = x_t^\prime$.
  \item[(b)] Sample $i_t^{j_t}$ according to,
    \begin{align*}
      \probab(i_t^{j_t} = m) = \frac{w_{t-1}^{m} f_\parameter(x_t^{j_t} \mid x_{t-1}^{m})}{\normsum{l} w_{t-1}^l f_\parameter(x_t^{j_t} \mid x_{t-1}^l)}.
    \end{align*}
  \item[(c)] Set $w_t^m = W_{t}^\parameter(x_t^m, x_{t-1}^{i_t^m})$ for $m = \range{1}{\Np}$.
  \end{enumerate}
\end{algorithm}

\begin{proposition}\label{prop:pgbsi_capf}%
  The conditional $\phi(\x_{1:\T}^{-j_{1:\T}}, \i_{2:\T} \mid \parameter, x_{1:\T}^{j_{1:\T}}, j_{1:\T})$ under $\phi$ is given
  by,
  \begin{align*}
    &\phi(\x_{1:\T}^{-j_{1:\T}}, \i_{2:\T} \mid \parameter, x_{1:\T}^{j_{1:\T}}, j_{1:\T}) \\
    &= \frac{\psi^\parameter(\x_{1:\T}, \i_{2:\T})}{R_1^\parameter(x_1^{j_1}) \prod_{t=2}^\T M_t^\parameter( i_t^{j_t}, x_t^{j_t} ) }
    \prod_{t=2}^\T \frac{w_{t-1}^{i_t^{j_t}} f_\parameter(x_t^{j_t} \mid x_{t-1}^{i_t^{j_t}})}{ \normsum{l} w_{t-1}^l f_\parameter(x_t^{j_t}\mid x_{t-1}^l)}.
  \end{align*}
  Consequently, the CAPF given in \Alg{capf}, conditioned on $\{ \parameter, x_{1:\T}^{j_{1:\T}}, j_{1:\T} \}$,
  generates a sample from this conditional distribution.
\end{proposition}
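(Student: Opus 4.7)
The plan is to derive the stated conditional by direct division, using Proposition~\ref{prop:pgbsi_marginal} to dispatch the denominator, and then to read off the sampling procedure by factorising the resulting expression term by term. Concretely, I would compute
\begin{align*}
  \phi(\x_{1:\T}^{-j_{1:\T}}, \i_{2:\T} \mid \parameter, x_{1:\T}^{j_{1:\T}}, j_{1:\T})
  = \frac{\phi(\parameter, \x_{1:\T}, \i_{2:\T}, j_{1:\T})}{\phi(\parameter, x_{1:\T}^{j_{1:\T}}, j_{1:\T})}.
\end{align*}
The numerator is just the defining expression \eqref{eq:phidef}, and Proposition~\ref{prop:pgbsi_marginal} gives the denominator as $p(\parameter, x_{1:\T}^{j_{1:\T}} \mid y_{1:\T})/\Np^\T$. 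The factor $p(\parameter, x_{1:\T}^{j_{1:\T}} \mid y_{1:\T})/\Np^\T$ then cancels, yielding exactly the formula claimed in the proposition. This part is essentially bookkeeping.

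To pass from this closed-form conditional to the validity of \Alg{capf}, I would split the resulting expression into the structural pieces that the algorithm actually samples. Using the definition of $\psi^\parameter$, the first ratio telescopes to
\begin{align*}
  \frac{\psi^\parameter(\x_{1:\T}, \i_{2:\T})}{R_1^\parameter(x_1^{j_1}) \prod_{t=2}^\T M_t^\parameter(i_t^{j_t}, x_t^{j_t})}
  = \prod_{\exind{l}{j_1}}^{\Np} R_1^\parameter(x_1^l) \prod_{t=2}^\T \prod_{\exind{m}{j_t}}^{\Np} M_t^\parameter(i_t^m, x_t^m),
\end{align*}
which is precisely the joint density under which steps 1(a) and 2(a) of \Alg{capf} generate the $\Np - 1$ free particles and ancestor indices at each time. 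The remaining factor $\prod_{t=2}^\T \frac{w_{t-1}^{i_t^{j_t}} f_\parameter(x_t^{j_t} \mid x_{t-1}^{i_t^{j_t}})}{\sum_{l} w_{t-1}^l f_\parameter(x_t^{j_t} \mid x_{t-1}^l)}$ is, term by term, the probability mass of the ancestor index $i_t^{j_t}$ drawn in step 2(b). Putting the pieces together, the joint law of all variables produced by \Alg{capf} equals the computed conditional.

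I do not expect a significant obstacle: the argument is essentially algebraic manipulation of \eqref{eq:phidef} together with a correct identification of which APF factors are removed by conditioning on the backward trajectory $\{x_{1:\T}^{j_{1:\T}}, j_{1:\T}\}$. The only point that deserves a moment of care is ensuring that the weights $w_{t-1}^m$ appearing in the ancestor-sampling step of \Alg{capf} are deterministic functions of $\{\x_{1:t-1}, \i_{2:t-1}\}$, so that the step 2(b) sampling probability factorises out cleanly when forming the joint density; this follows immediately from the weight recursion \eqref{eq:pf_W}, and so the identification of the algorithm's output law with the claimed conditional is complete.
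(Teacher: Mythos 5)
Your proposal is correct and follows essentially the same route as the paper: the paper's proof is exactly the division of the joint density \eqref{eq:phidef} by the marginal $\phi(\parameter, x_{1:\T}^{j_{1:\T}}, j_{1:\T}) = p(\parameter, x_{1:\T}^{j_{1:\T}} \mid y_{1:\T})/\Np^\T$ supplied by \Prop{pgbsi_marginal}, after which the stated formula is immediate. Your additional factorisation identifying the surviving terms with steps 1(a), 2(a) and 2(b) of \Alg{capf} is a correct elaboration of the algorithmic claim that the paper leaves implicit.
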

\begin{proof}
 See \App{proofs}.
\end{proof}
Note that the conditional density considered in \Prop{pgbsi_capf} only depends on known quantities, as opposed to the density $\phi$
defined in \eqref{eq:phidef} which depends explicitly on the joint posterior $p(\parameter, x_{1:\T} \mid y_{1:\T})$ (which in general
is not known).

As a final component of the \pgbsi sampler we need a way to generate a sample state trajectory. As previously
pointed out, we aim to do this by running a backward simulator. The fact that the backward simulator indeed generates a sample from
one of the conditionals of $\phi$ is assessed in the following proposition.

\begin{proposition}\label{prop:pgbsi_bsi}
  The conditional $\phi(j_{1:\T} \mid \parameter, \x_{1:\T}, \i_{2:\T})$ under $\phi$ is given by,
  \begin{align*}
    \phi(j_{1:\T} \mid \parameter, \x_{1:\T}, \i_{2:\T})
    = \frac{w_\T^{j_\T}}{\normsum{l} w_\T^l} \prod_{t = 2}^\T \frac{w_{t-1}^{j_{t-1}} f_\parameter(x_{t}^{j_t} \mid x_{t-1}^{j_{t-1}})}
    {\normsum{l}  w_{t-1}^l f_\parameter(x_t^{j_t} \mid x_{t-1}^l)}.
  \end{align*}
  Consequently, the backward simulator given in \Alg{bsi}, conditioned on $\{ \parameter, \x_{1:\T}, \i_{2:\T} \}$,
  generates a sample from this conditional distribution.
\end{proposition}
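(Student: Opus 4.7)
The plan is to compute the conditional density directly from \eqref{eq:phidef}. Since $\phi(j_{1:\T} \mid \parameter, \x_{1:\T}, \i_{2:\T})$ is proportional (in $j_{1:\T}$) to $\phi(\parameter, \x_{1:\T}, \i_{2:\T}, j_{1:\T})$, I would isolate the $j_{1:\T}$-dependent factors in \eqref{eq:phidef}, show they collapse to the expression in the claim up to a constant, and then compute that constant by explicit normalisation.

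First I would expand $p(\parameter, x_{1:\T}^{j_{1:\T}} \mid y_{1:\T})$ by the state-space factorisation and use the weight-function identity \eqref{eq:pf_W} in the form
\begin{align*}
\priorx_\parameter(x_1^{j_1}) g_\parameter(y_1 \mid x_1^{j_1}) &= w_1^{j_1} R_1^\parameter(x_1^{j_1}), \\
g_\parameter(y_t \mid x_t^{j_t}) f_\parameter(x_t^{j_t} \mid x_{t-1}^{i_t^{j_t}}) &= w_t^{j_t} \nu_{t-1}^{i_t^{j_t}} R_t^\parameter(x_t^{j_t} \mid x_{t-1}^{i_t^{j_t}}),
\end{align*}
for $t \geq 2$. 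Because the state-space factorisation carries the transition $f_\parameter(x_t^{j_t} \mid x_{t-1}^{j_{t-1}})$ along the backward trajectory while the second identity involves $x_{t-1}^{i_t^{j_t}}$, I would multiply and divide by $f_\parameter(x_t^{j_t} \mid x_{t-1}^{i_t^{j_t}})$. Substituting the definition of $M_t^\parameter$ into the denominator of \eqref{eq:phidef}, the factors $R_1^\parameter(x_1^{j_1})$, $R_t^\parameter(x_t^{j_t} \mid x_{t-1}^{i_t^{j_t}})$, $\nu_{t-1}^{i_t^{j_t}}$, $w_{t-1}^{i_t^{j_t}}$ and $f_\parameter(x_t^{j_t} \mid x_{t-1}^{i_t^{j_t}})$ each cancel between the three factors of \eqref{eq:phidef}. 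What survives, as a function of $j_{1:\T}$, is precisely
\begin{align*}
\phi(\parameter, \x_{1:\T}, \i_{2:\T}, j_{1:\T}) \propto w_\T^{j_\T} \prod_{t=2}^\T \frac{w_{t-1}^{j_{t-1}} f_\parameter(x_t^{j_t} \mid x_{t-1}^{j_{t-1}})}{\normsum{l} w_{t-1}^l f_\parameter(x_t^{j_t} \mid x_{t-1}^l)}.
\end{align*}

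To finish, I would normalise over $j_{1:\T}$. Every factor of the form $\sum_{j_{t-1}} w_{t-1}^{j_{t-1}} f_\parameter(x_t^{j_t} \mid x_{t-1}^{j_{t-1}}) / \normsum{l} w_{t-1}^l f_\parameter(x_t^{j_t} \mid x_{t-1}^l)$ equals $1$ for each fixed $j_t$, so summing $j_1, \ldots, j_{\T-1}$ telescopes to $1$ and leaves $\sum_{j_\T} w_\T^{j_\T} = \normsum{l} w_\T^l$. Dividing by this yields the claimed expression. The ``consequently'' statement is then immediate: \Alg{bsi} samples $j_\T$ marginally from $w_\T^{j_\T}/\normsum{l} w_\T^l$ and, for $t = \T-1, \ldots, 1$, samples $j_t$ given $j_{t+1}$ from the weighted transition kernel, so by the chain rule its output has exactly the joint density derived above.

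The main obstacle is the algebraic bookkeeping in the cancellation step, since the definition of $\phi$ contains two distinct ancestors of each particle — the backward index $j_{t-1}$ inherited from the state-space factorisation and the forward ancestor index $i_t^{j_t}$ appearing in the APF kernel $M_t^\parameter$ — and one has to insert the identity ratio $f_\parameter(x_t^{j_t}\mid x_{t-1}^{j_{t-1}})/f_\parameter(x_t^{j_t}\mid x_{t-1}^{i_t^{j_t}})$ at just the right place for the terms to line up. Once this is done, everything else reduces to a straightforward cancellation and a single telescoping sum.
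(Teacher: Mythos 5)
Your proposal is correct and follows essentially the same route as the paper: both compute the conditional directly from \eqref{eq:phidef}, use the weight-function identity to trade the transition density evaluated at the forward ancestor $x_{t-1}^{i_t^{j_t}}$ for the one along the backward trajectory (your inserted ratio $f_\parameter(x_t^{j_t}\mid x_{t-1}^{j_{t-1}})/f_\parameter(x_t^{j_t}\mid x_{t-1}^{i_t^{j_t}})$ is exactly the paper's identity \eqref{eq:proof_W} with $k=i_t^{j_t}$), cancel the proposal-kernel factors, and arrive at the same unnormalised expression. The only cosmetic differences are that the paper organises the cancellation as a recursive peeling from $t=\T$ downwards and leaves the normalisation implicit, whereas you cancel globally and make the telescoping normalisation explicit.
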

\begin{proof}
 See \App{proofs}.
\end{proof}
We are now ready to present the \pgbsi sampler, given in \Alg{pg_bsi}. Based on the propositions and discussion above, the three steps of
the method [2(a)--2(c)] can be interpreted as:
\begin{enumerate}
\item[(a)] Draw $\parameter^\star \sim \phi(\parameter \mid x_{1:\T}^{j_{1:\T}}, j_{1:\T})$,
\item[(b)] Draw $\x_{1:\T}^{\star,-j_{1:\T}}, \i_{2:\T}^{\star} \sim
  \phi(\x_{1:\T}^{-j_{1:\T}}, \i_{2:\T} \mid \parameter^\star, x_{1:\T}^{j_{1:\T}}, j_{1:\T})$,
\item[(c)] Draw $j_{1:\T}^\star \sim
    \phi(j_{1:\T} \mid \parameter^\star, \x_{1:\T}^{\star,-j_{1:\T}}, \i_{2:\T}^{\star},  x_{1:\T}^{j_{1:\T}})$.
\end{enumerate}
Hence, each step of \Alg{pg_bsi} is a standard Gibbs update, targeting the distribution $\phi$, and will thus leave $\phi$
invariant\footnote{In step (a) we marginalise $\phi$ over some of the variables before conditioning. This is known in the literature
as collapsed Gibbs (see \eg \cite[Sec. 6.7]{Liu:2001}) and leaves the target distribution invariant.}. It follows that $\phi$ is a stationary
distribution of the \pgbsi sampler.

It is worth to note that the sampling scheme outlined above is not a ``complete'' multistage Gibbs sweep, since it does not loop over all
the variables of the model. More precisely, we do not sample new locations for the particles $x_{1:\T}^{j_{1:\T}}$ in any of the
steps above (the same is true also for the original \pg sampler). This might raise some doubts about ergodicity of the
resulting Markov chain. However, as established in \Thm{pgbsi} below, the \pgbsi sampler is indeed ergodic
(as long as the idealised Gibbs sampler is), meaning that it is a valid MCMC method.
Intuitively, the reason for this is that the collection of variables that is ``left out'' is chosen randomly at each iteration.

\begin{assumption}\label{assumption:ergodicity}
  The idealised Gibbs sampler, defined by drawing alternately from $p(\parameter \mid x_{1:\T}, y_{1:\T})$ and
  $p_\parameter(x_{1:\T} \mid y_{1:\T})$ is $p$-irreducible and aperiodic (and hence converges for $p$-almost all
  starting points).
\end{assumption}

\begin{theorem}\label{thm:pgbsi}
  For any number of particles $\Np \geq 2$,
  \begin{enumerate}
  \item the \pgbsi sampler of \Alg{pg_bsi} defines a transition kernel on the space
    $\parspace \times \setX^{\Np\T} \times \crange{1}{\Np}^{\Np(\T-1)+\T}$
    with invariant density $\phi(\parameter, \x_{1:\T}, \i_{2:\T}, j_{1:\T})$. 
  \item Additionally, if \Assumption{pf} and \Assumption{ergodicity} hold, then the \pgbsi sampler generates a sequence
    $\{\parameter(r), x_{1:\T}(r)\}$ whose marginal distribution $\mathcal{L}^\Np(\{\parameter(r), x_{1:\T}(r)\} \in \cdot )$ satisfy,
    \begin{align*}
      \| \mathcal{L}^\Np(\{\parameter(r), x_{1:\T}(r)\} \in \cdot ) - p(\cdot \mid y_{1:\T}) \|_{\text{TV}} \goesto 0
    \end{align*}
    as $r \goesto \infty$ for $p$-almost all starting points ($\|\cdot\|_{\text{TV}}$ being the total variation norm).
  \end{enumerate}
\end{theorem}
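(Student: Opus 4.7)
The proof splits naturally into a Gibbs-invariance argument for claim~1 and a comparison with the idealised sampler for claim~2. For claim~1 I would verify that each of the three substeps 2(a)--2(c) of \Alg{pg_bsi} is a $\phi$-invariant (collapsed or standard) Gibbs move. Step 2(a) draws $\parameter^\star$ from $p(\parameter \mid x_{1:\T}^{j_{1:\T}}, y_{1:\T})$; by \Prop{pgbsi_marginal} this density coincides with the full conditional $\phi(\parameter \mid x_{1:\T}^{j_{1:\T}}, j_{1:\T})$, so 2(a) is a collapsed Gibbs update in which $\x_{1:\T}^{-j_{1:\T}}$ and $\i_{2:\T}$ are first integrated out and then redrawn. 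Steps 2(b) and 2(c) are standard Gibbs updates: \Prop{pgbsi_capf} identifies the law of the CAPF output with $\phi(\x_{1:\T}^{-j_{1:\T}}, \i_{2:\T} \mid \parameter^\star, x_{1:\T}^{j_{1:\T}}, j_{1:\T})$, and \Prop{pgbsi_bsi} identifies the backward-simulator output with $\phi(j_{1:\T} \mid \parameter^\star, \x_{1:\T}, \i_{2:\T})$. Since the composition of $\phi$-invariant kernels is $\phi$-invariant, part~1 follows.

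For claim~2 I would invoke \Prop{pgbsi_marginal} again: the $\phi$-marginal of $(\parameter, x_{1:\T}^{j_{1:\T}})$ is exactly $p(\parameter, x_{1:\T} \mid y_{1:\T})$, so TV convergence of the extended-space chain to $\phi$ implies TV convergence of the marginal chain $\{\parameter(r), x_{1:\T}(r)\}$ to the target posterior. It therefore suffices to establish $\phi$-irreducibility and aperiodicity, after which the claim follows from a standard MCMC convergence theorem. Aperiodicity is almost immediate: \Assumption{pf} makes the conditional density in \Prop{pgbsi_capf} strictly positive on the support of $\phi$, so the chain has positive probability of remaining in any fixed neighbourhood of its current state in one sweep.

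The crux, and the main obstacle, is $\phi$-irreducibility. I would attack it by a one-step minorisation of the PG-BSi kernel against the idealised Gibbs kernel. Writing $P_\Np$ for the PG-BSi kernel projected onto the coordinates $(\parameter, x_{1:\T}^{j_{1:\T}})$ and $P_\infty$ for the idealised Gibbs kernel targeting $p(\parameter, x_{1:\T} \mid y_{1:\T})$, the goal is to produce a constant $\epsilon_\Np > 0$ with
\begin{align*}
P_\Np\!\left((\parameter, x_{1:\T}), A\right) \geq \epsilon_\Np \, P_\infty\!\left((\parameter, x_{1:\T}), A\right)
\end{align*}
uniformly in the conditioning state and measurable $A$. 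This requires showing that the $\Np - 1$ CAPF offspring are, with positive probability, rich enough for the backward simulator to reach an arbitrary trajectory in the support of $p_{\parameter^\star}(\,\cdot \mid y_{1:\T})$. Here \Assumption{pf} is essential, as it guarantees positive proposal density wherever the target has mass; combined with the fact that the backward weights in \Alg{bsi} are strictly positive whenever $f_\parameter(x_{t+1} \mid x_t) > 0$, this yields the required lower bound. Once the minorisation is in hand, \Assumption{ergodicity} transfers $p$-irreducibility and aperiodicity from $P_\infty$ to $P_\Np$, and the TV convergence claimed in part~2 follows by the usual ergodic theorem for Markov chains (cf.~\cite{AndrieuDH:2010}).
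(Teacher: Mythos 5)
Your overall architecture coincides with the paper's proof. Part~1 is argued exactly as you do: steps 2(a)--2(c) are identified, via \Prop{pgbsi_marginal}, \Prop{pgbsi_capf} and \Prop{pgbsi_bsi}, as a collapsed Gibbs update followed by two standard Gibbs updates for $\phi$, so their composition leaves $\phi$ invariant. Part~2 is likewise obtained in the paper by transferring $\phi$-irreducibility and aperiodicity from the idealised Gibbs sampler of \Assumption{ergodicity} using the support condition of \Assumption{pf} (the paper simply defers this to the analogous argument in \cite[Theorem~5]{AndrieuDH:2010}), then invoking a standard total-variation convergence result \cite{AthreyaDS:1996}, with \Prop{pgbsi_marginal} giving the passage from the extended chain to the marginal chain $\{\parameter(r), x_{1:\T}(r)\}$.

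The one place where you over-reach is the claimed one-step minorisation $P_\Np\geq \epsilon_\Np P_\infty$ with a constant $\epsilon_\Np>0$ \emph{uniform} in the conditioning state: such uniformity is neither available in general (the probability that the $\Np-1$ CAPF offspring together with the backward weights reach a given set depends on the conditioned trajectory, and the relevant density ratios need not be bounded away from zero over $\parspace\times\setX^\T$) nor needed for the conclusion. To inherit $p$-irreducibility and aperiodicity it suffices to show, pointwise in the current state, that any set charged by the idealised kernel is charged by the \pgbsi kernel; this follows from \Assumption{pf} together with the strict positivity of the backward-simulation weights wherever $f_\parameter(x_{t+1}\mid x_t)>0$, and this weaker, state-dependent positivity is what the argument of \cite{AndrieuDH:2010} actually uses. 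Also note that your first aperiodicity remark (positive probability of staying near the current state) is not by itself a proof of aperiodicity on a general state space; the correct route is the transfer from the idealised sampler, which you state afterwards. With these adjustments your proof matches the paper's.
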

\begin{proof}
  See \App{proofs}.
\end{proof}

It should be noted that \cite{Whiteley:2010,WhiteleyAD:2010} motivate the use of backward simulation in the \pg sampler
in a slightly different way. They simply note that the backward simulator is a Markov kernel with invariant distribution
given by the original artificial target proposed in \cite{AndrieuDH:2010}. Hence, they view the backward simulator as
a way to resample the ancestor indices for the extracted trajectory $x_{1:\T}^{j_{1:\T}}$.

\begin{algorithm}[ptb]
  \caption{\pgbsi~ -- particle Gibbs w. backward simulation}
  \label{alg:pg_bsi}
  1. \textbf{Initialise:} Set $\parameter(0)$, $x_{1:\T}(0)$ and $j_{1:\T}(0)$ arbitrarily.

  \noindent
  2. \textbf{For $r \geq 1$, iterate:}
  \begin{enumerate}
  \item[(a)] Sample $\parameter(r) \sim p(\parameter \mid x_{1:\T}(r-1), y_{1:\T})$.
  \item[(b)] Run a CAPF targeting $p_{\parameter(r)}(x_{1:\T} \mid y_{1:\T})$, conditioned on $\{x_{1:\T}(r-1), j_{1:\T}(r-1)\}$.
  \item[(c)] Run a backward simulator to generate $j_{1:\T}(r)$. Set $x_{1:\T}(r)$ to the corresponding particle trajectory.
  \end{enumerate}
\end{algorithm}


\subsection{Numerical illustration}\label{sec:pgbsi_example}
We have argued that the \pgbsi sampler in \Alg{pg_bsi} should be preferable over the standard
\pg sampler in the sense that it produces a Gibbs sampler with better mixing properties.
Since the \pgbsi sampler is the basis for all the methods proposed in this paper, we pause
for a while and consider a numerical illustration of this improved mixing.
Hence, consider the following nonlinear state-space model,
\begin{subequations}
  \label{eq:example_nl1d}
  \begin{align}
    x_{t+1} &= \beta_1 x_t + \beta_2 \frac{x_t}{1+x_t^2} + \beta_3 \cos(1.2t) + v_t, \\
    y_t &= 0.05 |x_t|^\alpha + e_t,
  \end{align}
\end{subequations}
where $x_1 \sim \N(0,5)$, $v_t \sim \N(0, \sigma_v^2)$ and $e_t \sim \N(0, \sigma_e^2)$; here $\N(0, \sigma^2)$
denotes a zero-mean Gaussian with variance $\sigma^2$.
For the time being, we assume that $\{\beta_1, \beta_2, \beta_3, \alpha\} = \{0.5, 25, 8, 2\}$ are
fixed and consider the problem of estimating the noise variances (see \Sec{mwpg_example2} for an experiment in which all the parameters are identified).
Hence, we set $\theta = \{\sigma_v^2, \sigma_e^2\}$. The same model was used in \cite{AndrieuDH:2010} to illustrate the \pg and the PMMH samplers.

We generate a set of observations $y_{1:500}$ according to \eqref{eq:example_nl1d} with $\sigma_v^2 = 10$ and $\sigma_e^2 = 1$.
The parameter priors are modelled as inverse gamma distributed with hyper-parameters $a = b = 0.01$.
We then employ the \pg sampler of \cite{AndrieuDH:2010} and the \pgbsi sampler of \Alg{pg_bsi} to estimate the posterior parameter distribution.
Since the theoretical validity of the \pgbsi and the \pg samplers can be assessed for any number of particles $\Np \geq 2$
(see \Thm{pgbsi}), 
it is interesting to see the practical implications of using very few particles.
Hence, we run the methods four times on the same data with $\Np = 5,\,20,\,1000$ and $5000$
($\Np = 5000$ as was used in \cite{AndrieuDH:2010}).
The parameters are initialised at $\theta(0) = \begin{pmatrix} 10 & 10 \end{pmatrix}^\+$
in all experiments.

\begin{figure*}[tpb]
  \centering
  \includegraphics[width = 0.49\linewidth]{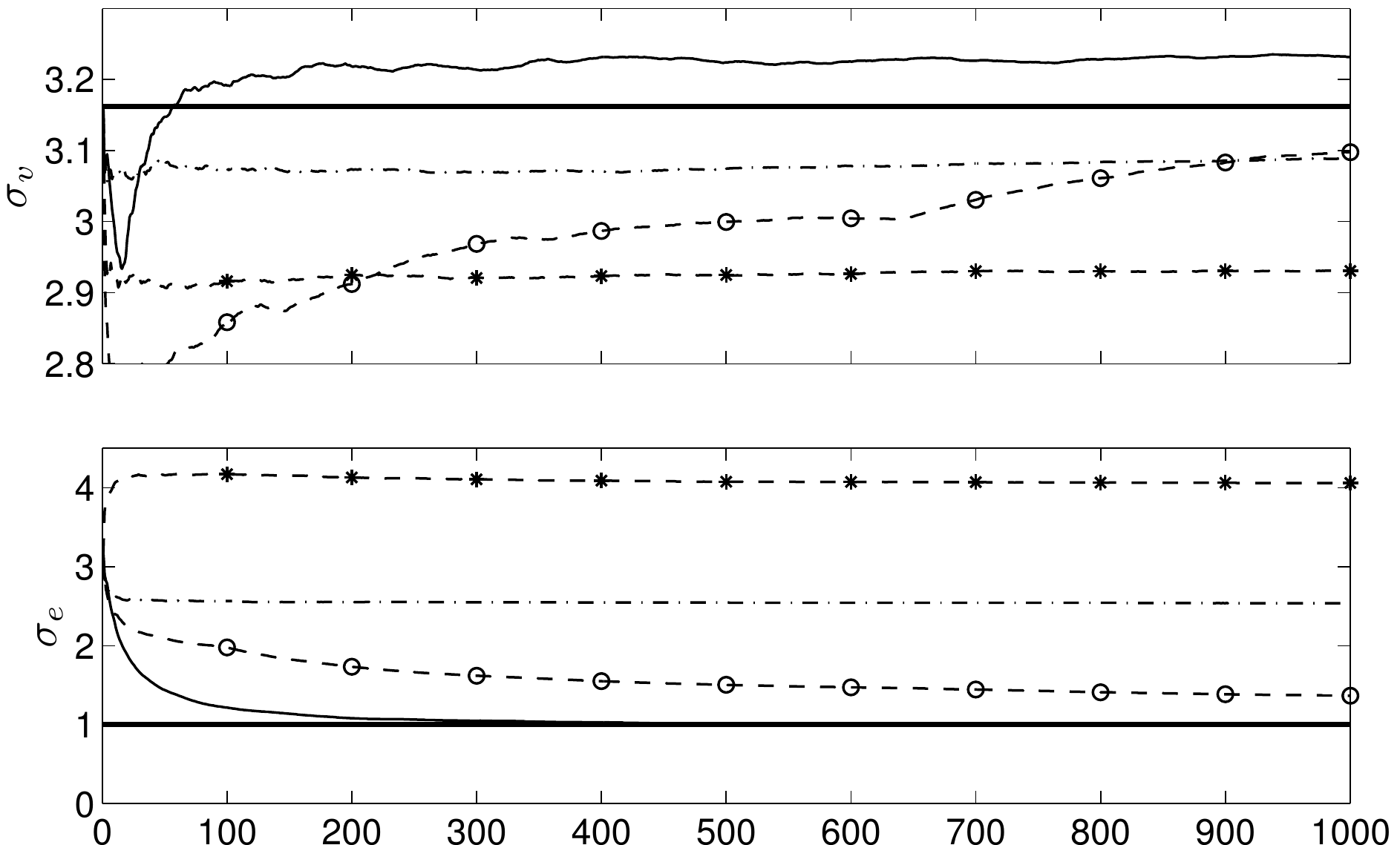}
  \includegraphics[width = 0.49\linewidth]{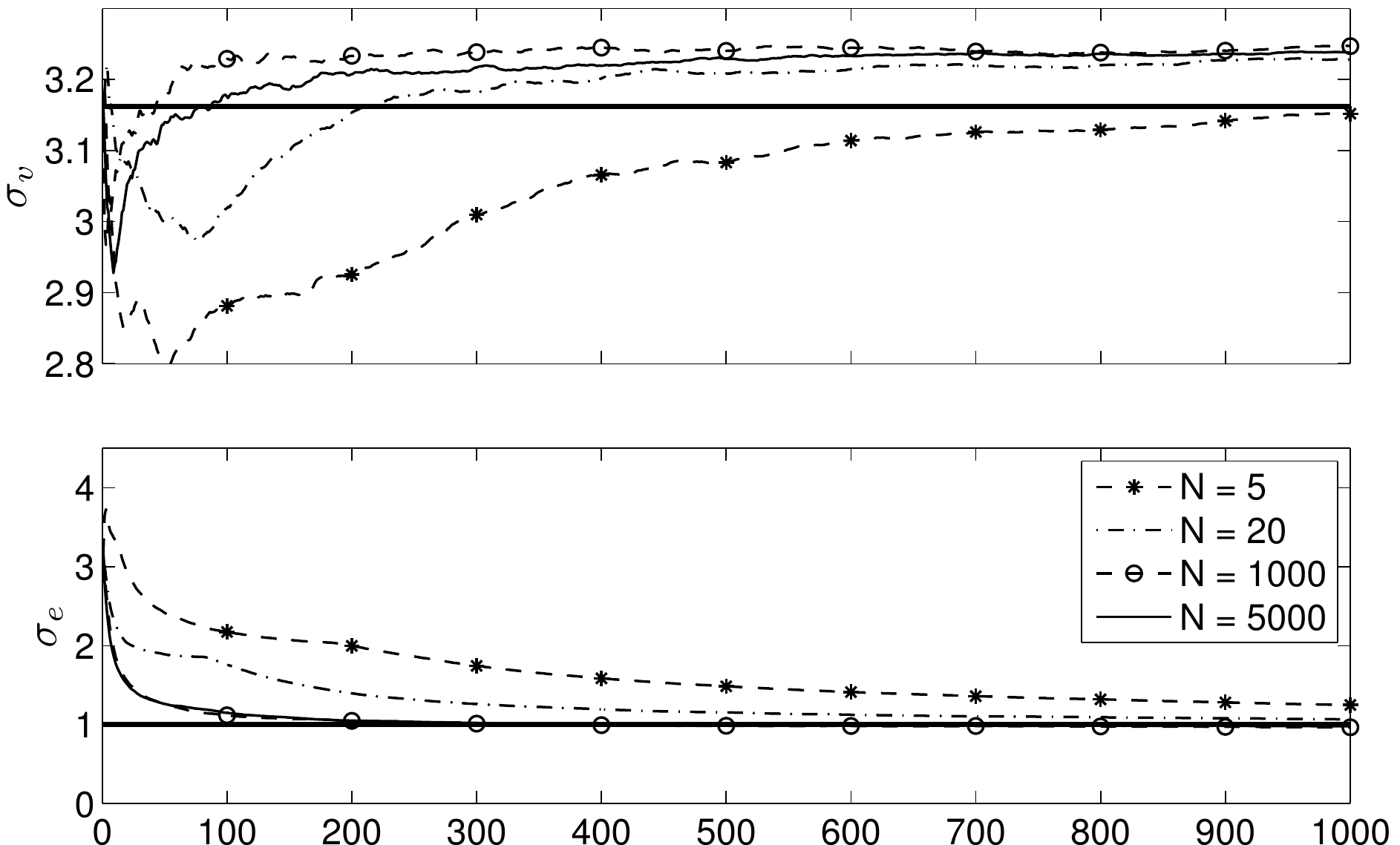}
  \caption{Running means for the estimated standard deviations $\sigma_v$ and $\sigma_e$ for the \pg sampler (left) and \pgbsi sampler (right)
    for a maximum of 1000 MCMC iterations. The four curves correspond to different number of particles. The ``true'' values are shown as thick solid lines.
Note that the estimate of $\sigma_v$ does not converge to this ``true'' value, but this is not surprising since we consider just one data realisation. Hence, the posterior mean may very well differ from the ``true'' value.}
  \label{fig:pgtest}
\end{figure*}

In \Fig{pgtest} we show the estimated posterior means of the standard deviations $\sigma_v$ and $\sigma_e$
(\ie the square roots of the means of $\parameter(1:r)$) \vs the number of MCMC iterations $r$. The left column shows the results for the \pg sampler.
When we use $\Np = 5000$ particles, there is a rapid convergence
towards the true values, indicating that the method mixes well and quickly finds a region of high posterior probability.
However, as we decrease the number of particles, the convergence is much slower. Even for $\Np = 1000$, the method struggles
and for $\Np = 20$ and $\Np = 5$ it does not seem to converge at all (in a reasonable amount of time).
The key observation is that the \pgbsi sampler (right column),
on the other hand, seems to be more or less unaffected by the large decrease in the number of particles.
In fact, the \pgbsi using just $\Np = 5$ performs equally well as the \pg sampler using $\Np = 1000$ particles.

\subsection{Why the big difference?}\label{sec:pgbsi_discussion}%
To see why the \pgbsi sampler is so much more insensitive to a decreased number of particles,
consider a toy example where we have generated $\T = 50$ observations from a 1st order linear system.
\Fig{example_tree} (left) shows the particle tree generated by the
CAPF at iteration $r$ of a \pg sampler, using $\Np = 30$ particles. Due to the degeneracy of the PF,
there is only one distinct particle trajectory for $t \leq 32$. In the \pg sampler,
we draw $x_{1:\T}(r)$ by sampling among the particles at time $\T$ and tracing the ancestral path of this
particle. This trajectory is illustrated as a thick black line in the figure. At the next iteration of the \pg sampler
we run a CAPF, conditioned on this particle trajectory. This results in the tree shown in \Fig{example_tree} (right).
Due to degeneracy, we once again obtain only a single distinct particle trajectory for $t \leq 29$.
Now, since we condition on the trajectory $x_{1:\T}(r)$, the particle tree generated at iteration
$r+1$ \emph{must} contain $x_{1:\T}(r)$. Hence, all particle trajectories available at iteration $r+1$ are identical
to $x_{1:29}(r)$ up to time $29$. Consequently, when we sample $x_{1:\T}(r+1)$ at iteration $r+1$, this trajectory will to a large
extent be identical to $x_{1:\T}(r)$. This results in a poor exploration of the state space, which in turn means that the Gibbs kernel will mix slowly.

\begin{figure}[ptb]
  \centering
  \includegraphics[height = \pffigheight]{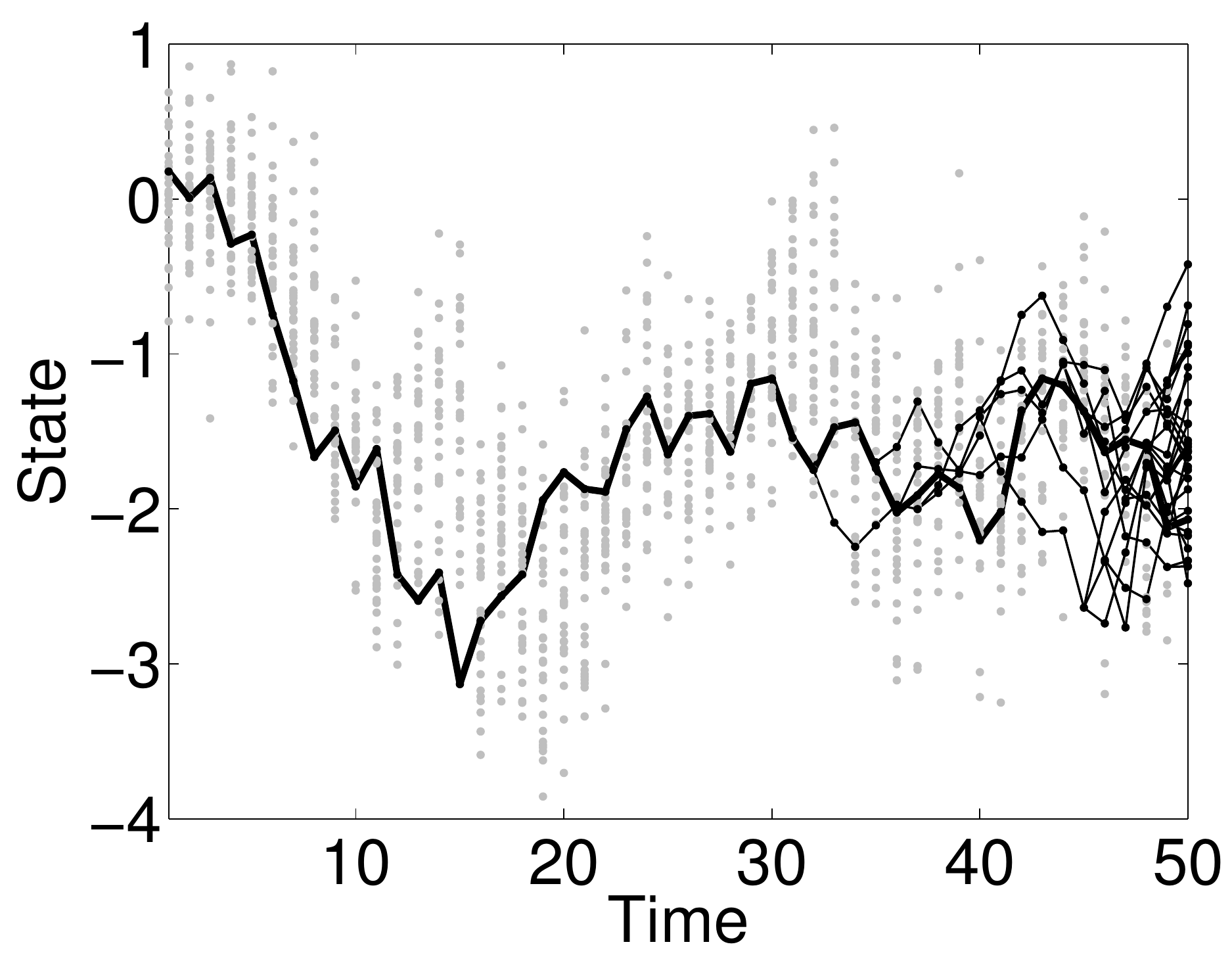}
  \includegraphics[height = \pffigheight]{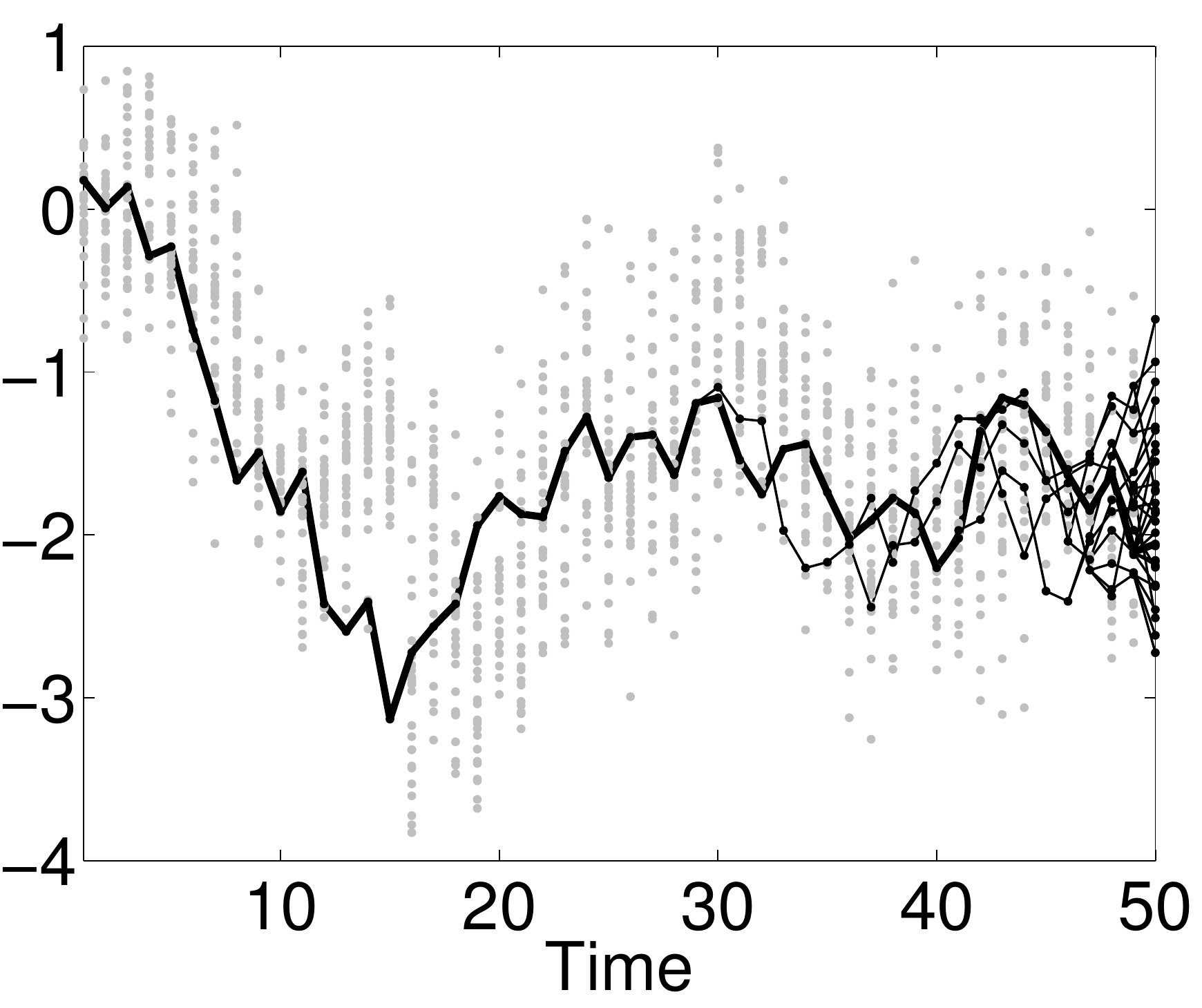}
  \caption{Particle tree generated by the CAPF at iterations $r$ (left) and $r+1$ (right) of the \pg sampler. The dots show the particle positions, the thin black lines show the ancestral dependence of the particles and the thick black lines show the sampled trajectories $x_{1:\T}(r)$ and $x_{1:\T}(r+1)$, respectively. Note that, due to degeneracy of the filter, the particles shown as grey dots are not reachable by tracing any of the ancestral lineages from time $\T$ and back.}
  \label{fig:example_tree}
\end{figure}

Based on this argument we also note that the mixing will be particularly poor when the degeneracy of the CAPF is severe. This
will be the case if the length of the data record $\T$ is large and/or if the number of particles $\Np$ is small. This
is consistent with the results reported in the previous section.

The reason for why \pgbsi can circumvent the poor mixing of the \pg kernel
is that the backward simulator explores all possible particle combinations when generating a particle trajectory,
and is thus not constrained to one of the ancestral paths. Consider the particles generated by the CAPF at iteration
$r$ of a \pgbsi sampler, shown in \Fig{example_particles_bsi} (left). A backward trajectory $x_{1:\T}(r)$, shown as a thick black line, is generated by
a backward simulator. At iteration $r+1$ of the \pgbsi sampler we will run a CAPF conditioned on $x_{1:\T}(r)$, generating the
particles shown in \Fig{example_particles_bsi} (right).
When we again apply a backward simulator to this collection of particles we will, with high probability, sample a trajectory $x_{1:\T}(r+1)$
which is entirely different from $x_{1:\T}(r)$. This leads to a much better exploration of the state space and thus a faster mixing Gibbs kernel.

\begin{figure}[ptb]
  \centering
  \includegraphics[height = \pffigheight]{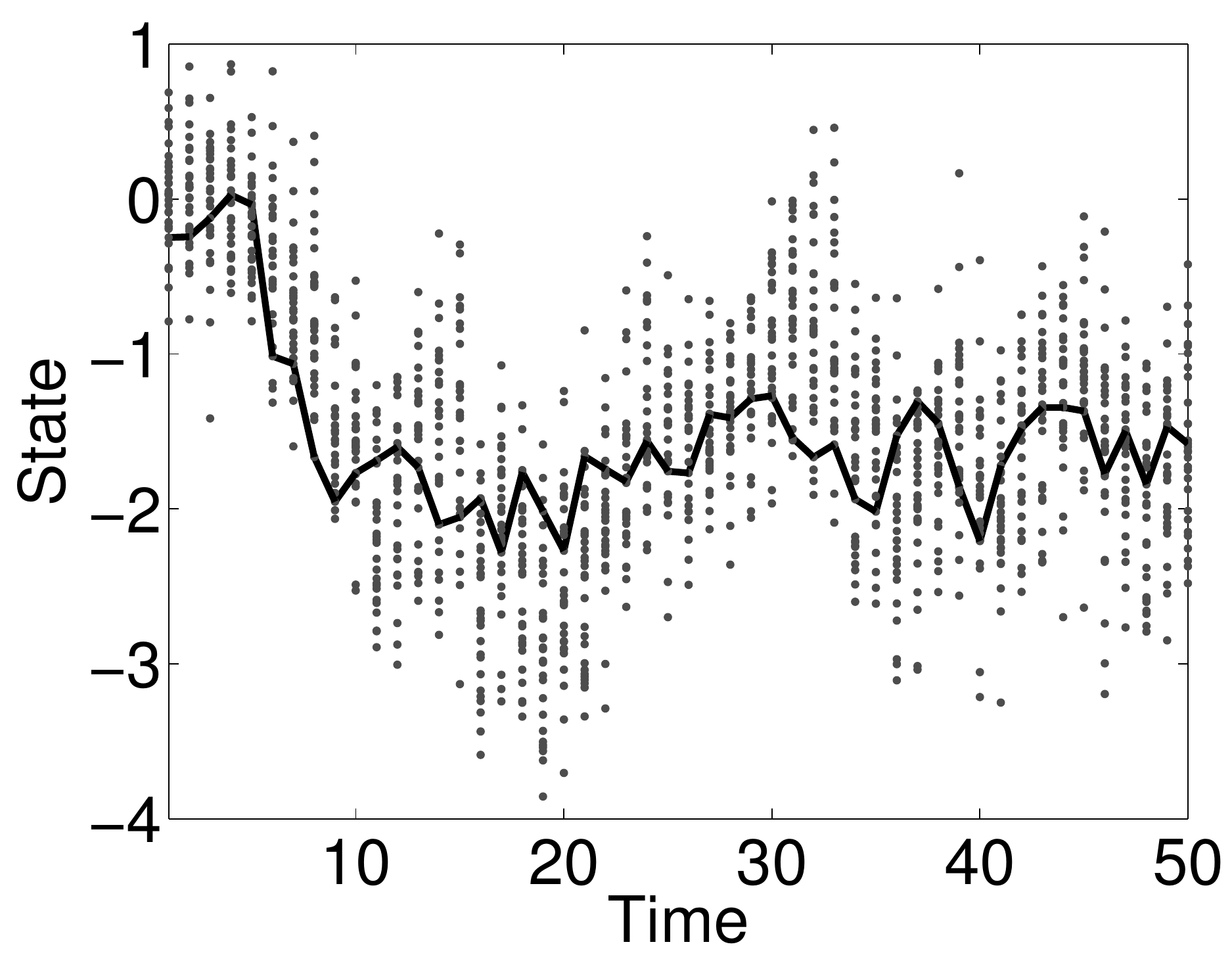}
  \includegraphics[height = \pffigheight]{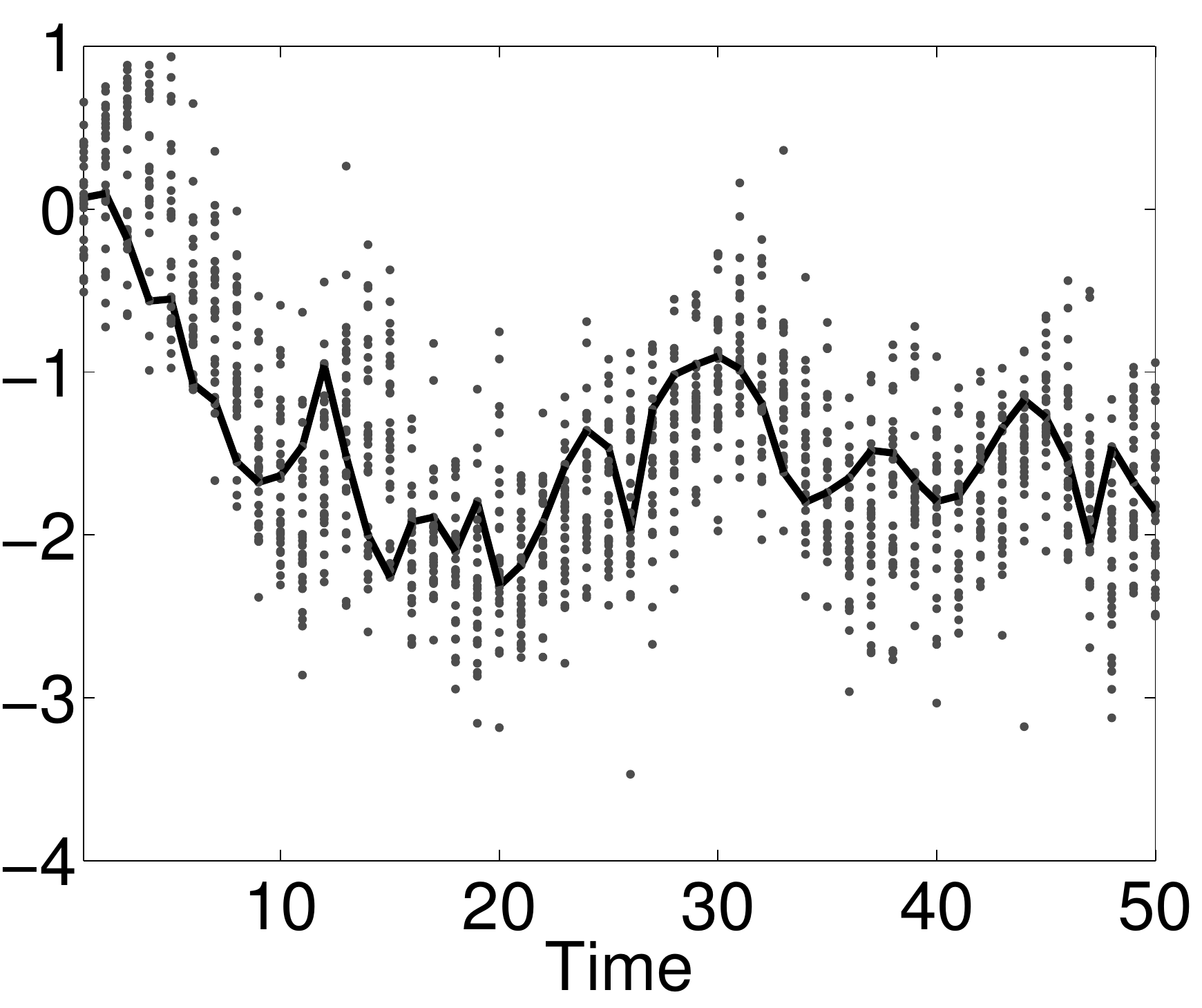}
  \caption{Particles generated by the CAPF at iterations $r$ (left) and $r+1$ (right) of the \pgbsi sampler. The dots show the particle positions and the thick black lines show the sampled backward trajectories $x_{1:\T}(r)$ and $x_{1:\T}(r+1)$, respectively. Note that all the particles at all time steps are reachable by the backward simulator.}
  \label{fig:example_particles_bsi}
\end{figure}

\begin{figure*}[tpb]
  \centering
  \includegraphics[width = 0.49\linewidth]{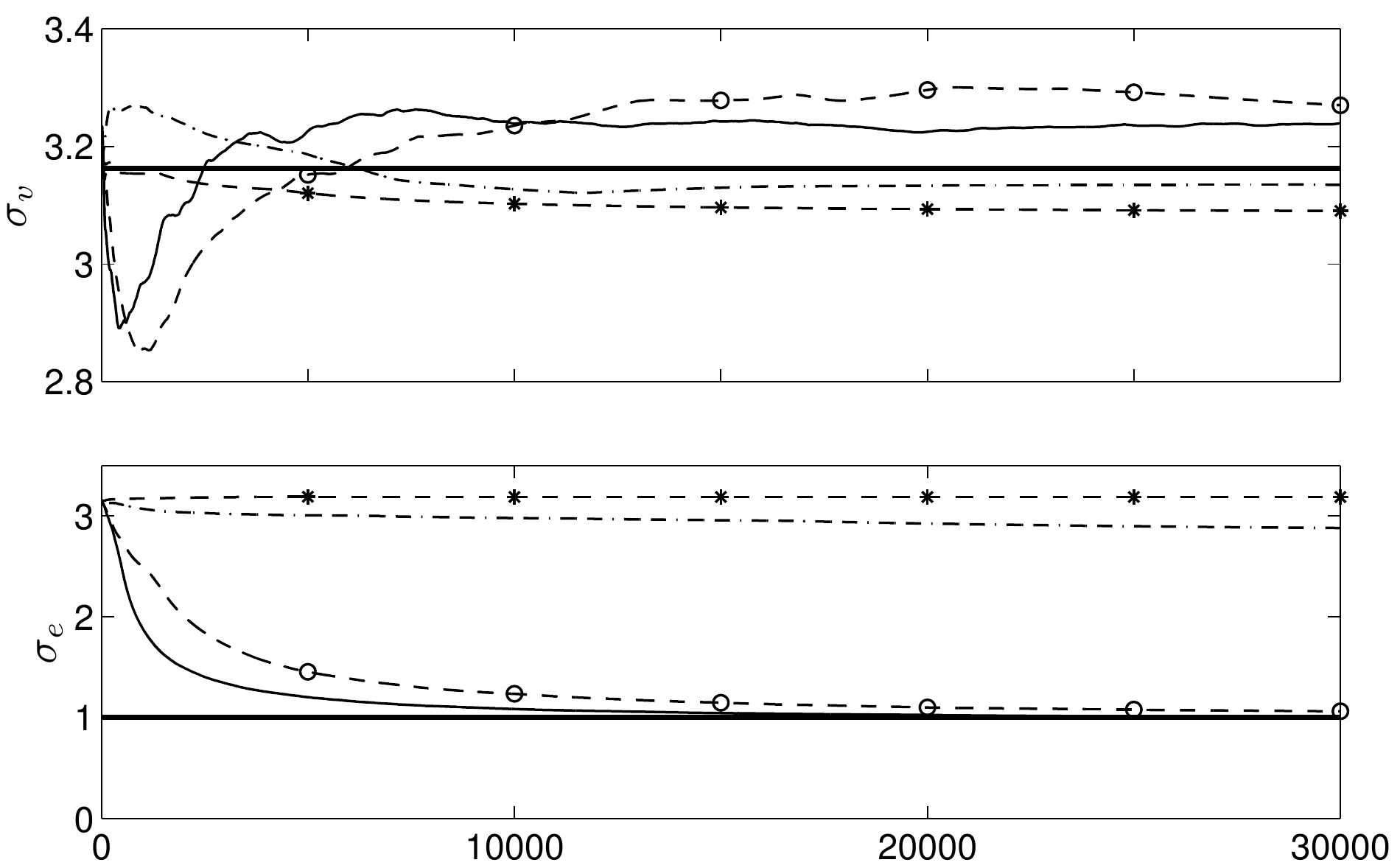}
  \includegraphics[width = 0.49\linewidth]{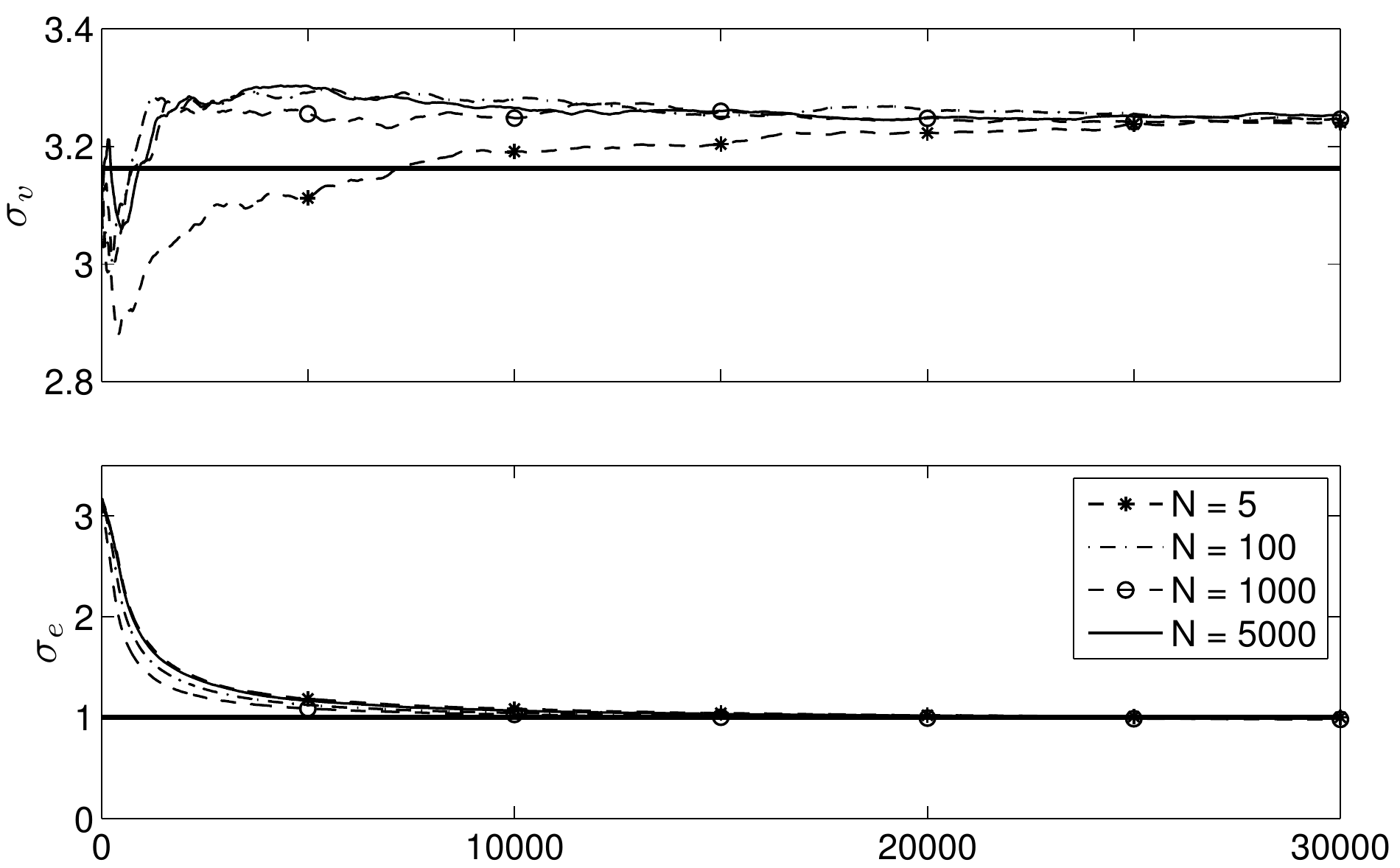}
  \caption{Running means for the estimated standard deviations $\sigma_v$ and $\sigma_e$ for the PMMH sampler (left) and MwPG sampler (right)
    for a maximum of 30000 MCMC iterations. The four curves correspond to different number of particles. The ``true'' values are shown as thick solid lines.
Note that the estimate of $\sigma_v$ does not converge to this ``true'' value, but this is not surprising since we consider just one data realisation. Hence, the posterior mean may very well differ from the ``true'' value.}
  \label{fig:mwpg_runningmeans}
\end{figure*}

\section{Metropolis within particle Gibbs}\label{sec:mwpg}
\subsection{The \mwpg sampler}
Quite often, it is not possible to sample exactly from the conditional parameter density $p(\parameter \mid x_{1:\T}, y_{1:\T})$.
If this is the case, we can replace step 2(a) in \Alg{pg_bsi} with an MH update. This results in what we refer to as a Metropolis within
particle Gibbs (\mwpg) sampler. The name stems from the commonly used term \emph{Metropolis-within-Gibbs}, which refers
to a hybrid MH/Gibbs sampler (see \cite{GilksBT:1995,Tierney:1994}). Hence, let us assume that $x_{1:\T}^\prime$ is a fixed trajectory and
consider the problem of sampling from $p(\parameter \mid x_{1:\T}^\prime, y_{1:\T})$ using MH.
We thus choose some proposal density
$q(\parameter \mid \parameter^\prime, x_{1:\T}^\prime)$, which may depend on the previous parameter $\parameter^\prime$ and the state trajectory $x_{1:\T}^\prime$.
The proposal can also depend on the fixed measurement sequence $y_{1:\T}$, but we shall not make that dependence explicit.
We then generate a sample $\parameter^{\prime\prime} \sim q(\parameter \mid \parameter^\prime, x_{1:\T}^\prime)$ and accept this with probability
\begin{align}
  \label{eq:mwpg_rhodef}
  \rho(\parameter^{\prime\prime}, \parameter^\prime, x_{1:\T}^\prime) &= 1 \wedge
  \frac{p(\parameter^{\prime\prime} \mid x_{1:\T}^\prime, y_{1:\T}) q(\parameter^\prime \mid \parameter^{\prime\prime}, x_{1:\T}^\prime) }
       {p(\parameter^\prime \mid x_{1:\T}^\prime, y_{1:\T})q(\parameter^{\prime\prime} \mid \parameter^\prime, x_{1:\T}^\prime) },
\end{align}
which is a standard MH update (see \eg \cite{Liu:2001}). The acceptance probability
can be computed directly from the quantities defining the model \eqref{eq:intro_ssm} since,
\begin{align*}
  p(\parameter \mid x_{1:\T}, y_{1:\T}) \propto \prod_{t=1}^\T g_\parameter(y_t \mid x_t)\prod_{t=1}^{\T-1} f_\parameter(x_{t+1} \mid x_t)
  \priorx_\parameter(x_1) p(\parameter).
\end{align*}

By plugging this MH step into the \pgbsi sampler we obtain the \mwpg method, presented in \Alg{mwpg}. Due to the fact that the MH step will leave the target
density invariant, the stationary distribution of the \mwpg sampler is the same as for the \pgbsi sampler. \mwpg can be seen as an alternative to
the PMMH sampler by \cite{AndrieuDH:2010}. Note that, as opposed to case of PMMH, the acceptance probability \eqref{eq:mwpg_rhodef} does not depend explicitly
on the likelihood $p(y_{1:\T} \mid \parameter)$.

\begin{algorithm}[ptb]
  \caption{\mwpg~ -- Metropolis within particle Gibbs}
  \label{alg:mwpg}
  1. \textbf{Initialise:} Set $\parameter(0)$, $x_{1:\T}(0)$ and $j_{1:\T}(0)$ arbitrarily.

  \noindent
  2. \textbf{For $r \geq 1$, iterate:}
  \begin{enumerate}
  \item[(a)] MH step for sampling a parameter:
    \begin{itemize}
    \item Sample $\parameter^{\prime\prime} \sim q(\parameter \mid \parameter(r-1), x_{1:\T}(r-1))$.
    \item Compute $\rho = \rho(\parameter^{\prime\prime}, \parameter(r-1), x_{1:\T}(r-1))$ according to \eqref{eq:mwpg_rhodef}.
    \item With probability $\rho$, set $\parameter(r) = \parameter^{\prime\prime}$, otherwise set $\parameter(r) = \parameter(r-1)$.
    \end{itemize}
  \item[(b)] Run a CAPF targeting $p_{\parameter(r)}(x_{1:\T} \mid y_{1:\T})$, conditioned on $\{x_{1:\T}(r-1), j_{1:\T}(r-1)\}$.
  \item[(c)] Run a backward simulator to generate $j_{1:\T}(r)$. Set $x_{1:\T}(r)$ to the corresponding particle trajectory.
  \end{enumerate}
\end{algorithm}

\subsection{Numerical illustration -- estimating variances}\label{sec:mwpg_example}%
Let us return to the simulation example studied in \Sec{pgbsi_example}. We use the same
batch of data $y_{1:500}$, generated from model \eqref{eq:example_nl1d} with $\sigma_v^2 = 10$ and $\sigma_e^2 = 1$.
As before, we wish to estimate the noise variances and set $\parameter = \{\sigma_v^2, \sigma_e^2\}$,
but now we apply the PMMH sampler of \cite{AndrieuDH:2010} and the \mwpg sampler in \Alg{mwpg}.
As proposal kernel for the parameter, we use a Gaussian random walk with standard deviation 0.15 for $\sigma_v^2$
and 0.08 for $\sigma_e^2$ (both methods use the same proposal).

Running means for the estimated standard deviations, for both methods, are shown in \Fig{mwpg_runningmeans}.
The difference between \mwpg and PMMH is similar to that between \pgbsi and \pg, especially when we consider the
effect of using few particles.
Compared to the \pgbsi sampler operating on the same data (see \Fig{pgtest} and note the different scaling of the axes)
the \mwpg sampler is slower to converge. However, this is expected, since the latter does not make use of the conjugacy of the priors.

In \Fig{mwpg_scatter} we show scatter plots for the Markov chains after 50000 iterations, using a burnin of 10000 iterations.
From these plots, it is clear that the estimated posterior parameter distribution for the \mwpg sampler is rather insensitive
to the number of particles used, and similar to what we get from PMMH using $\Np = 5000$. The same effect can be seen by
analysing the autocorrelation functions, shown in \Fig{mwpg_acf}.
Finally, the average acceptance probabilities for the two methods and for different number of particles are given in \Tab{mwpg_accept}.

It is worth to note that, in the limit $\Np \goesto \infty$, the PMMH will ``converge to'' an idealised, marginal MH sampler. The
\mwpg, on the other hand, ``converges to'' an idealised Metropolis within Gibbs sampler. Hence, for a very large number of particles,
we expect that PMMH should outperform \mwpg. Furthermore, and more importantly, whether or not \mwpg is preferable over PMMH should be
very much problem dependent. If there is a strong dependence between the states and the parameters, we expect that the \mwpg sampler should suffer
from poor mixing, since it samples the parameters conditioned on the states, and vice versa. In such cases, the PMMH might perform better, since
it more closely resembles a marginal MH sampler.

\begin{multicols}{2}
  \begin{figure}[H]
    \centering
    \includegraphics[height = 4.3cm]{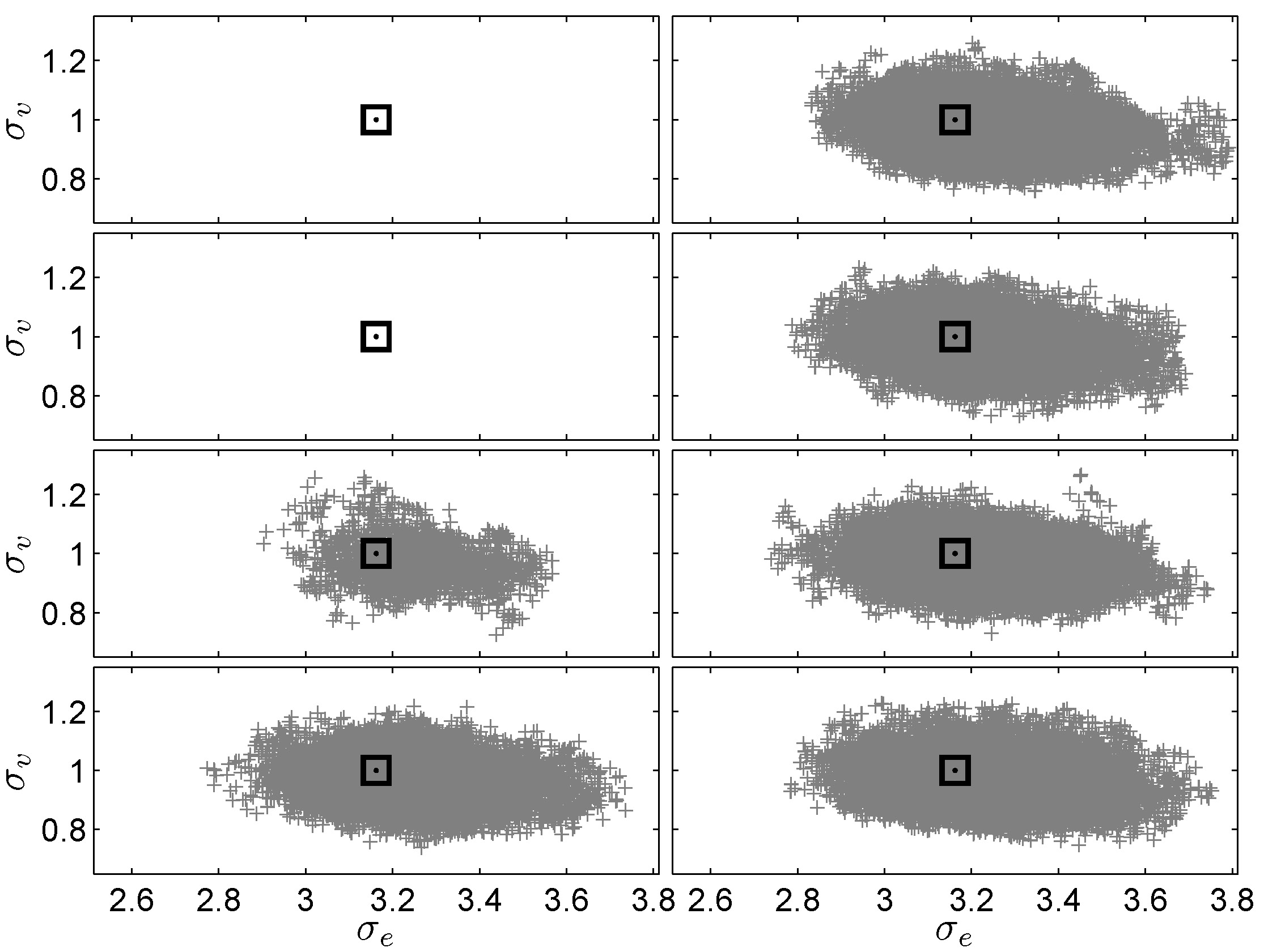}
    \caption{Scatter plots for $\{\sigma_v, \sigma_e\}$ for PMMH (left column) and \mwpg (right column). From top to bottom;
      $\Np = 5$, $\Np = 100$, $\Np = 1000$, $\Np = 5000$. 
      PMMH using $\Np = 5$ and $\Np = 100$ does not find the correct region of high posterior probability, and the samples lie outside
      the axes.}
    \label{fig:mwpg_scatter}
  \end{figure}

  \begin{figure}[H]
    \centering
    \includegraphics[height = 4.3cm]{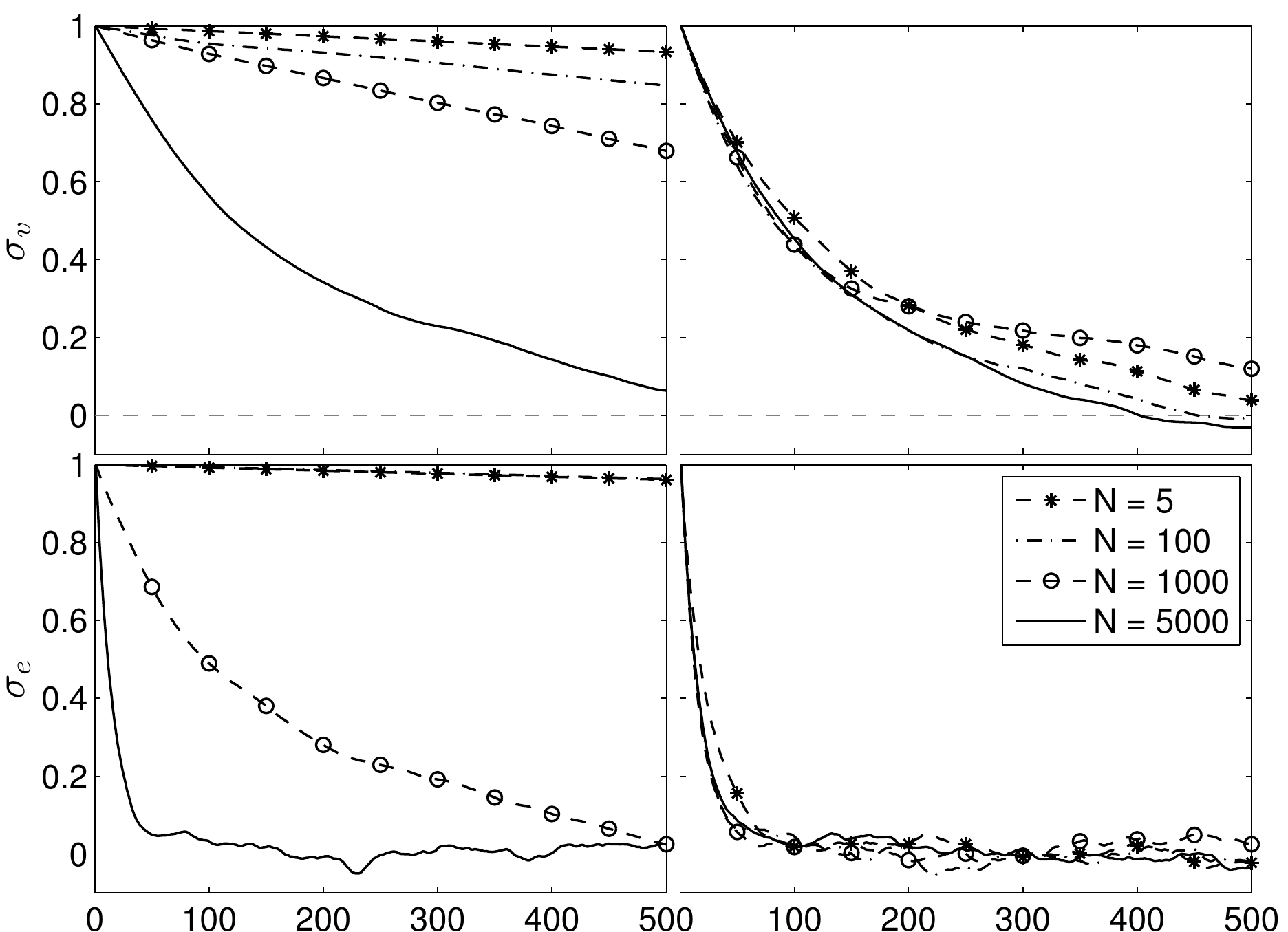}
    \caption{Autocorrelation plots for $\sigma_v$ (top row) and $\sigma_e$ (bottom row) for PMMH (left column) and \mwpg (right column). There is
      a sharp drop in autocorrelation for \mwpg for any number of particles.}
    \label{fig:mwpg_acf}
  \end{figure}
\end{multicols}

\begin{table}[ptb]
  \caption{Average acceptance probabilities}
  \centering
  \begin{tabular}{rcccccc}
    \toprule
    $\Np$ & $5$ & $100$ & $800$ & $2500$ & $7000$ & $10000$ \\
    \midrule
    PMMH & $1.4\cdot 10^{-4}$ & $8.6\cdot 10^{-3}$ & $0.082$ & $0.27$ & $0.45$ & $0.52$\\
    \mwpg & $0.62$ & $0.62$ & $0.61$ & $0.61$ & $0.61$ & $0.61$ \\
    \bottomrule
  \end{tabular}
  \label{tab:mwpg_accept}
\end{table}

\subsection{Numerical illustration -- estimating all parameters}\label{sec:mwpg_example2}%
So far, we have focused on estimating the noise variances in the model \eqref{eq:example_nl1d}. However,
Johannes, Polson and Yae \cite{JohannesPY:2010} suggested to identify all the parameters
$\parameter = \{\beta_1, \beta_2, \beta_3, \alpha, \sigma_v^2, \sigma_e^2\}$ in model \eqref{eq:example_nl1d},
as a more challenging test of PMCMC. They use the same parameter values as in \Sec{pgbsi_example},
but with $\sigma_v^2 = 1$ and $\sigma_e^2 = 10$, and apply an alternative, SMC based method with 300000 particles to estimate the parameters.
Here, we address the same problem with \mwpg using $\Np = 5$ particles and 60000 MCMC iterations.

We use vague normal distributed priors for $\beta_1$, $\beta_2$ and $\beta_3$, a uniform prior over $[1,\,3]$ for $\alpha$ and
inverse gamma priors with $a = b = 0.01$ for $\sigma_v^2$ and $\sigma_e^2$. To put even more pressure on the method,
Andrieu, Doucet and Holenstein \cite[p.~338]{AndrieuDH:2010} suggested to use a larger data set. Therefore, we use
$\T = 2000$ samples, generated from the model \eqref{eq:example_nl1d} and apply the \mwpg sampler. To propose values
for $\alpha$, we use a Gaussian random walk with standard deviation 0.02, constrained to the interval $[1,\,3]$.
For the remaining parameters, we make use of conjugacy of the priors and sample exactly from the conditional posterior
densities. \Fig{mwpg_histograms} show the estimated, marginal
posterior parameter densities for the six parameters, computed after a burnin of 10000 iterations. The method
appears to do a good job at finding the posterior density, even in this challenging scenario and using only 5 particles.

\begin{figure}[ptb]
  \centering
  \includegraphics[width = 0.7\columnwidth]{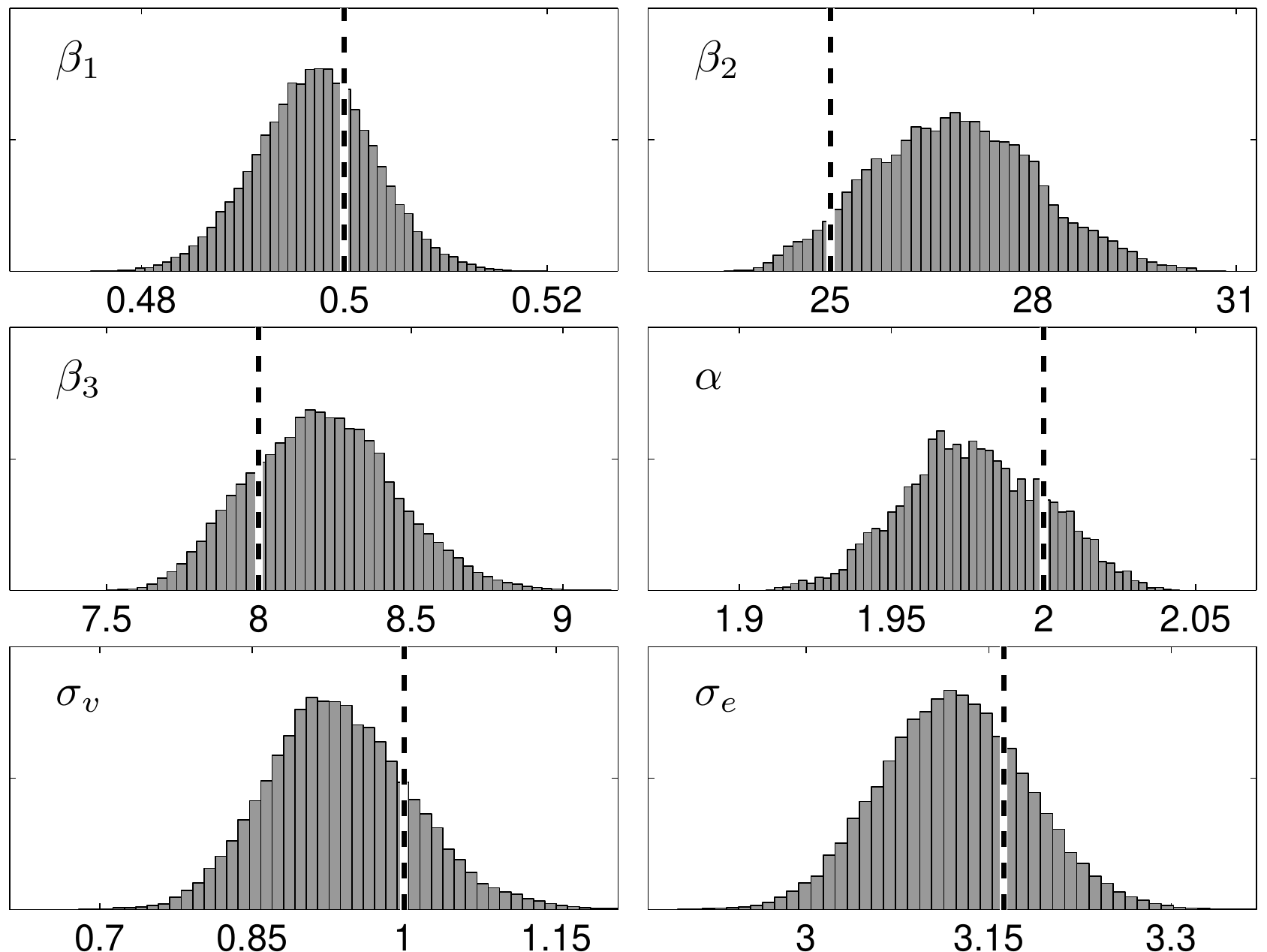}
  \caption{Posterior densities for the parameters of model \eqref{eq:example_nl1d}.
    The true values are marked by vertical dashed lines.}
  \label{fig:mwpg_histograms}
\end{figure}

\section{\pgbsi for joint smoothing}\label{sec:smoothing}%
\subsection{Collapsed \pgbsi}
Assume now that the parameter $\parameter$ is known and that our objective is to solve the smoothing problem for the model \eqref{eq:intro_ssm}.
That is, we seek the joint smoothing density $p_\parameter(x_{1:\T} \mid y_{1:\T})$ or some marginal thereof.
Several particle based methods have been presented in the literature, devoted to this problem, see \eg
 \cite{GodsillDW:2004,BriersDM:2010,DoucGMO:2011,FearnheadWT:2010}.
Many of these have quadratic complexity in the number of particles. Though, quite recently some advancements have been made towards linear
complexity, see \cite{DoucGMO:2011,FearnheadWT:2010}. In \cite{AndrieuDH:2010}, the PIMH sampler is suggested as a PMCMC method targeting
the joint smoothing density. The benefit of this is that the stationary distribution of the PIMH is the \emph{exact} smoothing distribution,
removing any bias that normally is present in more ``traditional'' particle smoothers. However, if we run the PIMH for $R$ iterations using $\Np$
particles, the computational complexity is $\Ordo(\Np R)$. In order to obtain a reasonable acceptance probability $\Np$ needs to be fairly high,
rendering the method computationally demanding. It is desirable to find a similar method that works even with a low number of particles.

The alternative that we propose in this paper is very straightforward. We simply run a \pgbsi according to \Alg{pg_bsi}, but since
$\parameter$ is fixed we skip step 2(a). This collapsed \pgbsi sampler will, similarly to PIMH, generate a Markov chain whose stationary distribution
is the exact smoothing distribution. Though, a major difference between the two methods is that the collapsed \pgbsi sampler will accept
all the generated particle trajectories, whereas the PIMH only will accept a portion of them in an MH fashion. The acceptance probability of
the PIMH will typically deteriorate as we decrease the number of particles $\Np$ or increase the length of the data record $\T$
(see \cite[Figure~3]{AndrieuDH:2010}).

The complexity of the collapsed \pgbsi sampler is still $\Ordo(\Np R)$, but since $\Np$ is allowed to be fairly small in this case, we believe that
this method could be a serious competitor to existing particle smoothers. Note that the method is exactly linear in the number of generated trajectories $R$.
Furthermore, the PMCMC based smoothers are trivially parallelisable, simply by running multiple parallel chains.

\subsection{Multiple trajectories}
One possible extension, to further improve the performance of the collapsed \pgbsi, is to use multiple trajectories as suggested
for the PIMH sampler in \cite{OlssonR:2011}.
Assume that we wish to estimate $\E[h(x_{1:\T}) \mid y_{1:\T}]$
for some function $h$ and that we run $R$ iterations of the \pgbsi sampler (possibly with some burnin).
The most straightforward estimator of $h$ is then
\begin{align*}
  \hat h = \frac{1}{R}\sum_{r=1}^R h(x_{1:\T}(r)).
\end{align*}
An alternative is to use the Rao-Blackwellised estimator
\begin{align*}
  \hat h_\text{RB} = \frac{1}{R}\sum_{r=1}^R \E \left[ h\left (x_{1:\T}(r) \right) \Mid \x_{1:\T}(r), \i_{2:\T}(r) \right].
\end{align*}
To compute the expectation in this expression, we would have sum over
all possible backward trajectories. Since there are $N^\T$ possible trajectories, this is in general not feasible,
unless $h$ is constrained to be a function of only some small subset of the $x$:s, \eg a single $x_t$ or a pair $\{x_t, x_{t+1}\}$.
However, a compromise would be to generate multiple backward trajectories in the backward simulator, providing an estimate of the
above expectation. Hence, at iteration $r$ of the \pgbsi sampler, instead of generating just a single backward trajectory
as in \Alg{pg_bsi}, we generate $M$ trajectories, $\tilde x_{1:\T}^m(r)$ for $m = \range{1}{M}$ and use the estimator
\begin{align*}
  \hat h_M = \frac{1}{RM} \sum_{r=1}^R \sum_{m=1}^M h(\tilde x_{1:\T}^m(r)).
\end{align*}
See \cite{OlssonR:2011} for a more in-depth discussion on using multiple trajectories in the context of PIMH.
Since the backward simulator generates conditionally independent and identically distributed (\iid) samples (given $\{\x_{1:\T}, \i_{2:\T}\}$),
we can set the state of the Markov chain to any of the generated backward trajectories, \eg $x_{1:\T}(r) = \tilde x_{1:\T}^1(r)$.

\subsection{Numerical illustration}\label{sec:smoothing_example}%
To illustrate the applicability of the collapsed \pgbsi for joint smoothing, we again consider
the same batch of $\T = 500$ samples from the model \eqref{eq:example_nl1d} as used in \Sec{pgbsi_example} and \Sec{mwpg_example}.
We apply the PIMH, the collapsed \pgbsi
and a state of the art particle smoother, known as fast forward filter/backward simulator (fast FFBSi) derived in \cite{DoucGMO:2011}.
The PMCMC samplers are run using $\Np = 20$ particles for $M = 1000$ iterations. The fast FFBSi is run using $\Np = 1000$ particles and
backward trajectories. To obtain something that we can visualise, let us consider the marginal smoothing density at time $t = 151$,
\ie $p_\parameter(x_{151} \mid y_{1:\T})$. \Fig{smoothing_cdfs} shows the corresponding cumulative distribution function\footnote{To obtain the ``true'' CDF, we run a fast FFBSi with 50000 particles and 10000 backward trajectories.} (CDF),
as well as the empirical CDFs from the three methods.

One question of interest is if it is possible to avoid burnin, by using a good initialisation of the MCMC sampler.
In the example above we did not use any burnin and the chain was initialised by
a run of an (unconditional) PF using $\Np = 20$ particles, followed by a backward simulation.
Another heuristic approach
is to estimate the joint smoothing distribution by running a ``standard'' particle smoother, \eg the fast FFBSi, with a moderate number of particles.
We can then initiate the collapsed \pgbsi sampler by sampling from the estimated joint smoothing distribution, which is hopefully close to the
stationary distribution. This heuristic can be of particular interest it we are able to run several parallel chains.
We can then apply a fast FFBSi to generate $M$ backward trajectories,
which are conditionally \iid samples from the empirical smoothing distribution, to initiate $M$ independent chains.

\begin{figure}[ptb]
  \centering
  \includegraphics[width = 0.7\columnwidth]{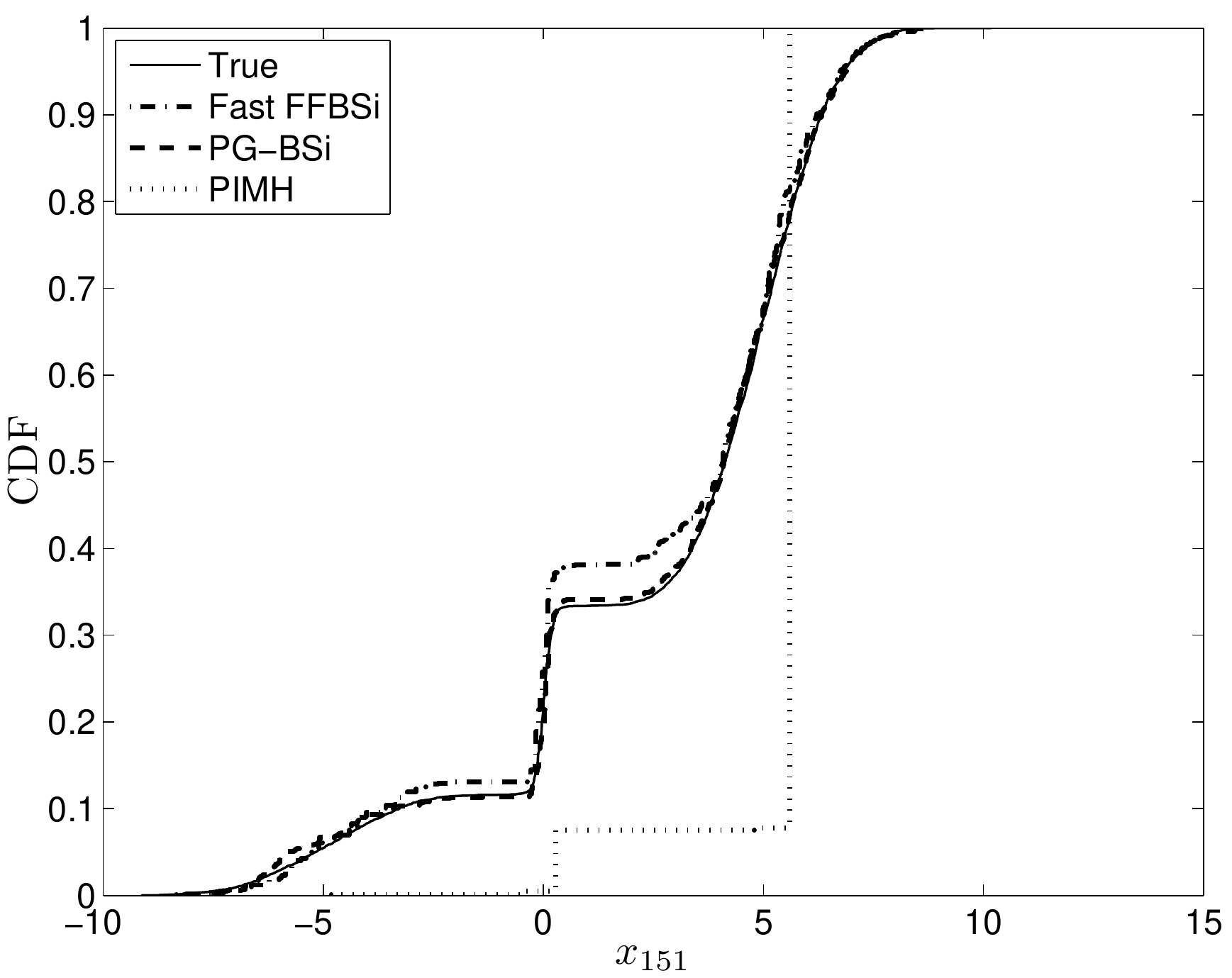}
  \caption{Empirical CDFs for the fast FFBSi, the collapsed \pgbsi and the PIMH. See text for details.}
  \label{fig:smoothing_cdfs}
\end{figure}

\section{Epidemiological model} 
As a final numerical illustration, we consider the identification of an epidemiological
model using the \mwpg sampler, based on search engine query observations.
Seasonal influenza epidemics each year cause millions of severe illnesses and hundreds of thousands of deaths
world-wide \cite{GinsbergMPBSB:2009}. Furthermore, new strains of influenza viruses can possibly
cause pandemics with very severe effects on the public health. The ability to accurately predict disease activity
can enable early response to such epidemics, which in turn can reduce their impact.

We consider a susceptible/infected/recovered (SIR) epidemiological model with environmental noise and seasonal
fluctuations \cite{KeelingR:2007,RasmussenRK:2011}, which is discretised according to the Euler-Maruyama method,
\begin{subequations}
  \label{eq:epi_SIR}
  \begin{align}
    S_{t+\sampletime} &= S_t + \mu P \sampletime - \mu S_t \sampletime - \left( 1 + v_t^S \right)\beta(t)\frac{S_t}{P}I_t \sampletime, \\
    I_{t+\sampletime} &= I_t - (\gamma + \mu)I_t \sampletime + \left( 1 + v_t^I \right)\beta(t)\frac{S_t}{P}I_t \sampletime, \\
    R_{t+\sampletime} &= R_t + \gamma I_t \sampletime - \mu R_t \sampletime.
  \end{align}
  Here, $S_t$, $I_t$ and $R_t$ represent the number of susceptible, infected and recovered individuals at time $t$ (months), respectively.
  The sampling time $\sampletime$ is chosen as 1/30 in our experiment, \ie we sample the model once a day.
  The state of the system is comprised of $x_t = \begin{pmatrix} S_t & I_t \end{pmatrix}^\+$.
  The total population size $P = 10^6$ and the host birth/death rate $\mu = 0.0012$ are both assumed known.
  The seasonally varying transmission rate is given by
  $\beta(t) = \bar \beta (1+ \alpha \sin (2\pi t/12))$ and the rate of recovery is given by $\gamma$.
  The process noise is assumed to be Gaussian according to
  $\begin{pmatrix} v_t^S & v_t^I \end{pmatrix}^\+ \sim \N(0, F^2 \eye{2}/\sqrt{\sampletime} )$.

  In \cite{RasmussenRK:2011}, the PMMH sampler is used to identify a similar SIR model, though with a different observation model
  than what we consider here (see below). A different Monte Carlo strategy, based on a particle filter with an augmented state space,
  for identification of an SIR model is proposed in \cite{SkvortsovR:2011}. Here, we
  use an observation model which is inspired by the Google Flu Trends project \cite{GinsbergMPBSB:2009}.
  The idea is to use the frequency of influenza related search engine queries to infer knowledge about the dynamics
  of the epidemic. As proposed by \cite{GinsbergMPBSB:2009}, we use a linear relationship between the observations
  and the log-odds of infected individuals, \ie
  \begin{align}
    \label{eq:epi_obs}
    y_t &= \rho \log \left( \frac{I_t}{P-I_t}\right) + e_t, & e_t &\sim \N(0,\tau^2).
  \end{align}
\end{subequations}
The parameters of the model are $\parameter = \{\gamma, \bar\beta, \alpha, F, \rho, \tau\}$,
with the true values given by 
$\gamma = 3$, $\bar\beta = 30.012$, $\alpha = 0.16$, $F = 0.03$ , $\rho = 1.1$ and $\tau = 0.224$.
We place a normal-inverse-gamma
prior on the pair $\{ \rho, \tau^2 \}$, \ie $p(\rho, \tau^2) = \N(\rho ; \mu_\rho, c_\rho \tau^2) \mathcal{IG}(\tau^2 ; a_\tau, b_\tau)$.
The hyperparameters are chosen as $\mu_\rho = 1$, $c_\rho = 0.5$ and $a_\tau = b_\tau = 0.01$. We place an inverse gamma prior on $F^2$ with hyperparameters
$a_F = b_F = 0.01$ and a normal prior on $\gamma$ with $\mu_\gamma = 3$ and $\sigma_\gamma^2 = 1$. The above mentioned priors are conjugate
to the model defined by \eqref{eq:epi_SIR}, which is exploited in the \mwpg sampler by drawing values from their conditional
posteriors. For the parameters $\bar\beta$ and $\alpha$ we use flat improper priors over the positive real line and sample new values according
to MH moves, as described in \Sec{mwpg}. Samples are proposed
independently for $\bar\beta$ and $\alpha$, according to Gaussian random walks with variances $5\cdot 10^{-3}$ and $5\cdot 10^{-4}$, respectively.

We generate 8 years of data with daily observations. The number of infected individuals $I_t$ over this time period is
shown in \Fig{epi_I}. The first half of the data batch is used for estimation of the model parameters. For this cause,
we employ the \mwpg sampler using $\Np = 20$ particles for 100000 MCMC iterations.
Histograms representing the estimated posterior parameter distributions are
shown in \Fig{epi_hist}. We notice some raggedness in some of the estimated marginals. This implies that
they have not fully converged, suggesting a fairly slow mixing of the Markov chain.
Still, the estimated distributions seem to provide accurate information about the general shapes and locations
of the posterior parameters distributions, and the true parameters fall well within the credible regions of the posteriors.

Finally, the estimated model is used to make 1 month ahead predictions of the disease activity for the subsequent 4 years,
as shown in \Fig{epi_I}. The predictions are made by Monte Carlo sampling from the posterior parameter distribution, followed by
a run of a particle filter on the validation data, using 100 particles. As can be seen, we obtain an accurate prediction of the
disease activity, which falls within the estimated 95 \% credibility intervals, one month in advance. 

\begin{figure}[ptb]
  \centering
  \includegraphics[width = 0.7\columnwidth]{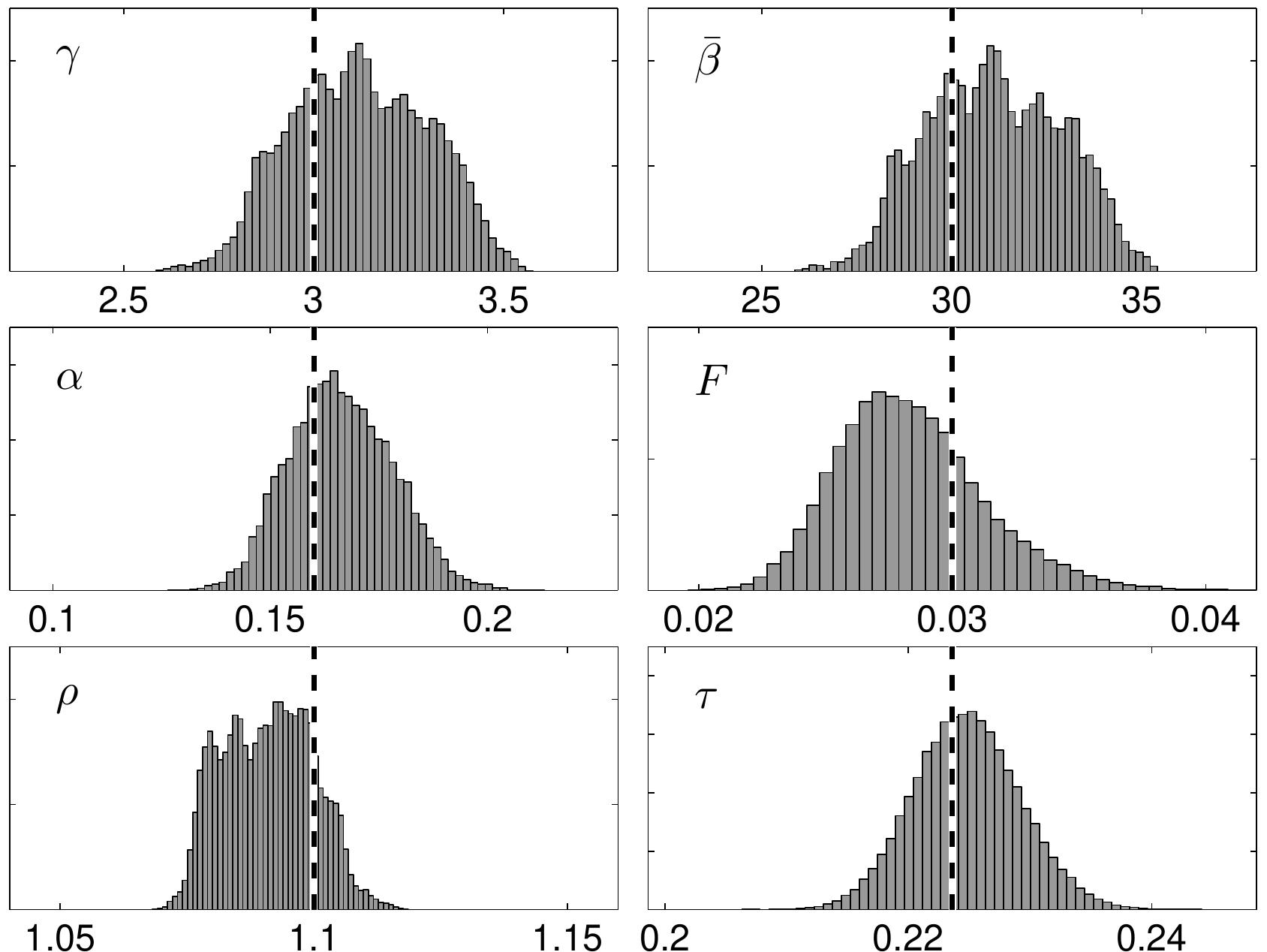}
  \caption{Posterior densities for the parameters of model \eqref{eq:epi_SIR}.
    The true values are marked by vertical dashed lines.}
  \label{fig:epi_hist}
\end{figure}

\begin{figure*}[ptb]
  \centering
  \includegraphics[width = 1\linewidth]{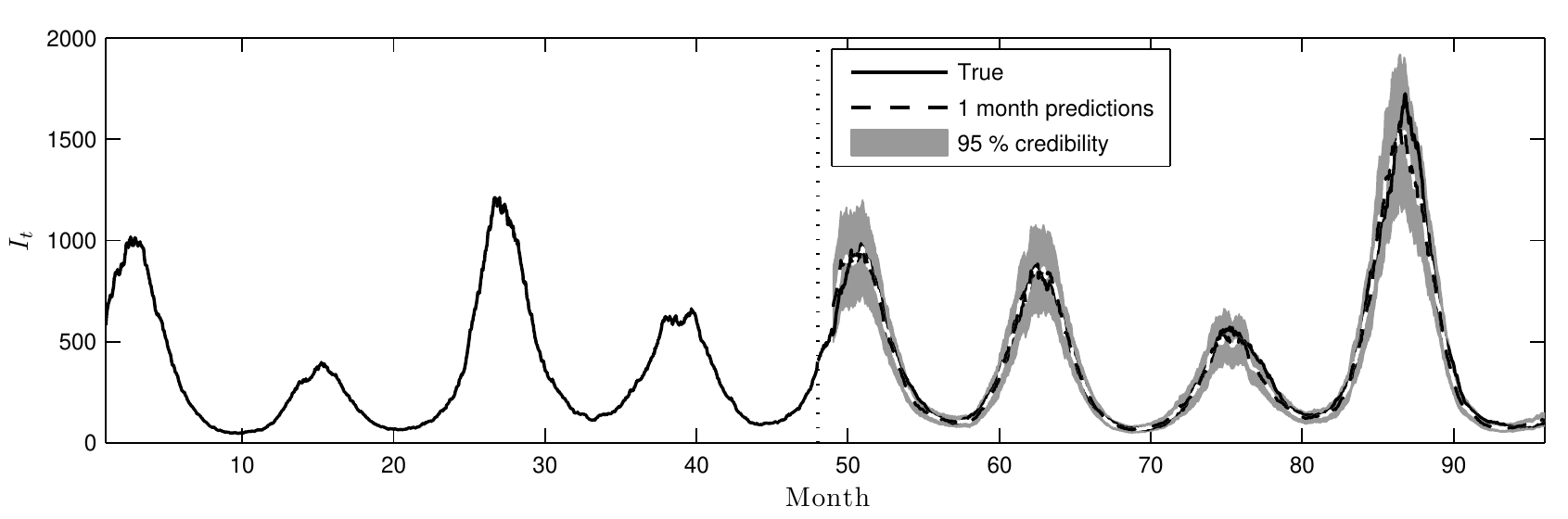}
  \caption{Disease activity (number of infected individualt $I_t$) over an 8 year period. The first 4 years are used as estimation data, to find the unknown parameters of the model.  For the consecutive 4 years, one month ahead predictions are computed using the estimated model.}
  \label{fig:epi_I}
\end{figure*}

\section{Conclusions and future work}\label{sec:conclusion}%
In the original PMCMC paper, Andrieu, Doucet and Holenstein writes \textit{``PMCMC sampling is much more robust and less likely to suffer from this
depletion problem} [referring to the PF]. \textit{This stems from the fact that PMCMC methods do not require SMC algorithms to provide a reliable
approximation of $p_\parameter(x_{1:\T} \mid y_{1:\T})$, but only to return a single sample approximately distributed according to
$p_\parameter(x_{1:\T} \mid y_{1:\T})$''}.

This observation is the very essence of the PMCMC methods and the reason for why they have shown to be such powerful tools
for joint Bayesian state and parameter inference. In this paper we have argued that the use of backward simulation in PMCMC
make this claim even stronger. Three methods have been presented, each similar to one
of the methods derived in \cite{AndrieuDH:2010}. These are; \pgbsi (similar to \pg), \mwpg (similar to PMMH) and collapsed \pgbsi (similar to PIMH).

We have shown experimentally that all of these methods work properly, even when we use very few particles in the underlying PFs.
This robustness to a decreased number of particles stems from a simple modification of the \pg sampler, namely to add a backward simulation
sweep to reduce the correlation between the particle trajectories sampled at any two consecutive iterations.
This modification does not only enable the application of PMCMC methods with very few particles,
but will also allow us to treat larger data sets.

It would be interesting to analyse how one should choose the number of particles. The mixing of the MCMC kernels will increase as we
increase $\Np$, but the improvement seems to saturate rather quickly (see \eg \Fig{mwpg_acf}). Based on this, we believe that
it in general is better to use fewer particles and more MCMC iterations. However, deducing to what extent this statement is problem
dependent is a topic for future work.

Another interesting avenue for future work is regarding theoretical aspects of the mixing of the \pgbsi sampler. That is,
whether or not it is possible to show that \pgbsi always will mix faster than \pg for the same number of particles.
We believe that this is the case, and some insight into the matter can be gained by considering the maximal correlations
for the two methods (see \eg \cite{LiuWK:1994} for a general discussion).

When it comes to a comparison between the \mwpg
and the PMMH samplers, it is not as clear to deduce which method that is preferable over the other. In fact, we believe
that this is very much problem dependent, and that the two methods are complementary.
If there is a strong dependence between the states and the parameters,
the \mwpg sampler should suffer from poor mixing, since it samples these groups of variables conditioned on each other.
Basically, this is the case if there is little noise in the system, or more severely, if the model is degenerate (\eg if noise only enters on some of the states).
In the latter case, the method might not even converge, since it is not clear that \Assumption{ergodicity} is fulfilled.
In such cases, the PMMH sampler might be a better choice. On the other hand, from the numerical examples that we have considered,
it is clear that the \mwpg sampler can substantially outperform the PMMH sampler in other cases. It would be of great interest
to conduct a more in-depth comparison between these two methods, to provide some general guidelines for when to use which method.

\appendix
\section{Proofs}\label{app:proofs}

\subsection{Proof of \Prop{pgbsi_capf}}
\Prop{pgbsi_capf} follows directly from \Prop{pgbsi_marginal} and the fact that
\begin{align*}
  \phi(\x_{1:\T}^{-j_{1:\T}}, \i_{2:\T} \mid \parameter, x_{1:\T}^{j_{1:\T}}, j_{1:\T})
  = \frac{\phi(\parameter, \x_{1:\T}, \i_{2:\T}, j_{1:\T})}{ \phi(\parameter, x_{1:\T}^{j_{1:\T}}, j_{1:\T}) }.
\end{align*}

\subsection{Proof of \Prop{pgbsi_bsi}}
We start by noting that, from the definition of the weight function \eqref{eq:pf_W}, we have for any $k \in \crange{1}{\Np}$,
\begin{align}
  \label{eq:proof_W}
  W_{t}^\parameter(x_t^{j_t}, x_{t-1}^{j_{t-1}}) = W_{t}^\parameter(x_t^{j_t}, x_{t-1}^{k})
  \times \frac{f_\parameter(x_{t}^{j_t} \mid x_{t-1}^{j_{t-1}}) \nu_{t-1}^\parameter(x_{t-1}^k, y_t) R_t^\parameter(x_t^{j_t} \mid x_{t-1}^k) }
       {f_\parameter(x_{t}^{j_t} \mid x_{t-1}^{k}) \nu_{t-1}^\parameter(x_{t-1}^{j_{t-1}}, y_t) R_t^\parameter(x_t^{j_t} \mid x_{t-1}^{j_{t-1}}) }.
\end{align}
Furthermore, it holds that
\begin{align*}
  &p_\parameter(x_{1:t}^{j_{1:t}} \mid y_{1:t}) \propto g_\parameter(y_t \mid x_t^{j_t}) f_\parameter(x_t^{j_t} \mid x_{t-1}^{j_{t-1}})
  p_\parameter(x_{1:t-1}^{j_{1:t-1}} \mid y_{1:t-1}),
\end{align*}
where, using \eqref{eq:proof_W} in the second equality,
\begin{align*}
  g_\parameter(y_t \mid x_t^{j_t}) f_\parameter(x_t^{j_t} \mid {}&x_{t-1}^{j_{t-1}}) 
  = W_{t}^\parameter(x_t^{j_t}, x_{t-1}^{j_{t-1}}) \nu_{t-1}^\parameter(x_{t-1}^{j_{t-1}}, y_t) R_t^\parameter(x_t^{j_t} \mid x_{t-1}^{j_{t-1}}) \\
  &=  W_{t}^\parameter(x_t^{j_t}, x_{t-1}^{i_t^{j_t}})
  \frac{f_\parameter(x_{t}^{j_t} \mid x_{t-1}^{j_{t-1}}) \nu_{t-1}^\parameter(x_{t-1}^{i_t^{j_t}}, y_t) R_t^\parameter(x_t^{j_t} \mid x_{t-1}^{i_t^{j_t}}) }
  {f_\parameter(x_{t}^{j_t} \mid x_{t-1}^{i_t^{j_t}}) } \\
  &\propto w_t^{j_t} \frac{ f_\parameter(x_{t}^{j_t} \mid x_{t-1}^{j_{t-1}}) }{ w_{t-1}^{i_t^{j_t}} f_\parameter(x_{t}^{j_t} \mid x_{t-1}^{i_t^{j_t}}) }
  M_t^\parameter(i_t^{j_t}, x_t^{j_t}).
\end{align*}
Using the above results, we can write
\begin{align*}
  &\phi(j_{1:\T} \mid \parameter, \x_{1:\T}, \i_{2:\T}) \\
  &\propto p_\parameter(x_{1:\T}^{j_{1:\T}} \mid y_{1:\T})
  \prod_{\exind{m}{j_1}}^\Np R_1^\parameter(x_1^m)
  \times \prod_{t = 2}^\T  \left( \prod_{\exind{m}{j_t}}^\Np M_t^\parameter(i_t^m, x_t^m) \right)
  \frac{w_{t-1}^{i_t^{j_t}} f_\parameter(x_t^{j_t} \mid x_{t-1}^{i_t^{j_t}})}{ \normsum{l} w_{t-1}^l f_\parameter(x_t^{j_t} \mid x_{t-1}^l)} \\
  &\propto p_\parameter(x_{1:\T-1}^{j_{1:\T-1}} \mid y_{1:\T-1})
  \prod_{\exind{m}{j_1}}^\Np R_1^\parameter(x_1^m)
  \times \prod_{t = 2}^{\T-1}  \left( \prod_{\exind{m}{j_t}}^\Np M_t^\parameter(i_t^m, x_t^m) \right)
  \frac{w_{t-1}^{i_t^{j_t}} f_\parameter(x_t^{j_t} \mid x_{t-1}^{i_t^{j_t}})}{ \normsum{l} w_{t-1}^l f_\parameter(x_t^{j_t} \mid x_{t-1}^l)}\\
  &\hspace{2cm}\times w_\T^{j_\T} \frac{f_\parameter(x_{\T}^{j_\T} \mid x_{\T-1}^{j_{\T-1}})} {\normsum{l}  w_{\T-1}^l f_\parameter(x_\T^{j_\T} \mid x_{\T-1}^l)}.
\end{align*}
By repeating the above procedure we get
\begin{align*}
   \phi(j_{1:\T} \mid \parameter, \x_{1:\T}, \i_{2:\T})
   \propto w_\T^{j_\T} \prod_{t = 2}^\T \frac{w_{t-1}^{j_{t-1}} f_\parameter(x_{t}^{j_t} \mid x_{t-1}^{j_{t-1}})}
   {\normsum{l}  w_{t-1}^l f_\parameter(x_t^{j_t} \mid x_{t-1}^l)},
\end{align*}
which completes the proof.

\subsection{Proof of \Thm{pgbsi}}
We have already assessed that $\phi$ is a stationary distribution of the \pgbsi sampler.
$\phi$-irreducibility and aperiodicity of the \pgbsi sampler under \Assumption{pf} and \Assumption{ergodicity} can be
assessed analogously as for the \pg sampler, see \cite[Theorem~5]{AndrieuDH:2010}. Convergence in total variation
then follows from \cite[Theorem~5]{AthreyaDS:1996}.

\bibliographystyle{IEEEtran}
\bibliography{references}

\begin{thebibliography}{10}
\providecommand{\url}[1]{#1}
\csname url@samestyle\endcsname
\providecommand{\newblock}{\relax}
\providecommand{\bibinfo}[2]{#2}
\providecommand{\BIBentrySTDinterwordspacing}{\spaceskip=0pt\relax}
\providecommand{\BIBentryALTinterwordstretchfactor}{4}
\providecommand{\BIBentryALTinterwordspacing}{\spaceskip=\fontdimen2\font plus
\BIBentryALTinterwordstretchfactor\fontdimen3\font minus
  \fontdimen4\font\relax}
\providecommand{\BIBforeignlanguage}[2]{{%
\expandafter\ifx\csname l@#1\endcsname\relax
\typeout{** WARNING: IEEEtran.bst: No hyphenation pattern has been}%
\typeout{** loaded for the language `#1'. Using the pattern for}%
\typeout{** the default language instead.}%
\else
\language=\csname l@#1\endcsname
\fi
#2}}
\providecommand{\BIBdecl}{\relax}
\BIBdecl

\bibitem{AndrieuDH:2010}
C.~Andrieu, A.~Doucet, and R.~Holenstein, ``Particle {M}arkov chain {M}onte
  {C}arlo methods,'' \emph{Journal of the Royal Statistical Society: Series B},
  vol.~72, no.~3, pp. 269--342, 2010.

\bibitem{Whiteley:2010}
N.~Whiteley, ``Discussion on {P}article {M}arkov chain {M}onte {C}arlo
  methods,'' Journal of the Royal Statistical Society: Series {B}, 72(3), p
  306--307, 2010.

\bibitem{WhiteleyAD:2011}
N.~Whiteley, C.~Andrieu, and A.~Doucet, ``Bayesian computational methods for
  inference in multiple change-points models,'' \emph{Submitted}, 2011.

\bibitem{Gustafsson:2010a}
F.~Gustafsson, ``Particle filter theory and practice with positioning
  applications,'' \emph{{IEEE} Aerospace and Electronic Systems Magazine},
  vol.~25, no.~7, pp. 53--82, 2010.

\bibitem{DoucetJ:2011}
A.~Doucet and A.~Johansen, ``A tutorial on particle filtering and smoothing:
  Fifteen years later,'' in \emph{The Oxford Handbook of Nonlinear Filtering},
  D.~Crisan and B.~Rozovsky, Eds.\hskip 1em plus 0.5em minus 0.4em\relax Oxford
  University Press, 2011.

\bibitem{CappeMR:2005}
O.~Capp\'e, E.~Moulines, and T.~Ryd\'en, \emph{Inference in Hidden {M}arkov
  Models}.\hskip 1em plus 0.5em minus 0.4em\relax Springer, 2005.

\bibitem{AndrieuDST:2004}
C.~Andrieu, A.~Doucet, S.~S. Singh, and V.~B. Tadi\'c, ``Particle methods for
  change detection, system identification, and contol,'' \emph{Proceedings of
  the IEEE}, vol.~92, no.~3, pp. 423--438, Mar. 2004.

\bibitem{KantasDSM:2009}
N.~Kantas, A.~Doucet, S.~Singh, and J.~Maciejowski, ``An overview of sequential
  {M}onte {C}arlo methods for parameter estimation in general state-space
  models,'' in \emph{Proceedings of the 15th IFAC Symposium on System
  Identification}, Saint-Malo, France, Jul. 2009, pp. 774--785.

\bibitem{SchonWN:2011}
T.~B. Sch\"{o}n, A.~Wills, and B.~Ninness, ``System identification of nonlinear
  state-space models,'' \emph{Automatica}, vol.~47, no.~1, pp. 39--49, 2011.

\bibitem{OlssonDCM:2008}
J.~Olsson, R.~Douc, O.~Capp{\'{e}}, and E.~Moulines, ``Sequential {M}onte
  {C}arlo smoothing with application to parameter estimation in nonlinear
  state-space models,'' \emph{Bernoulli}, vol.~14, no.~1, pp. 155--179, 2008.

\bibitem{Gopaluni:2008}
R.~B. Gopaluni, ``Identification of nonlinear processes with known model
  structure using missing observations,'' in \emph{Proceedings of the 17th IFAC
  World Congress}, Seoul, South Korea, Jul. 2008.

\bibitem{PoyiadjisDS:2011}
G.~Poyiadjis, A.~Doucet, and S.~S. Singh, ``Particle approximations of the
  score and observed information matrix in state space models with application
  to parameter estimation,'' \emph{Biometrika}, vol.~98, no.~1, pp. 65--80,
  2011.

\bibitem{Liu:2001}
J.~S. Liu, \emph{{M}onte {C}arlo Strategies in Scientific Computing}.\hskip 1em
  plus 0.5em minus 0.4em\relax Springer, 2001.

\bibitem{GodsillDW:2004}
S.~J. Godsill, A.~Doucet, and M.~West, ``{M}onte {C}arlo smoothing for
  nonlinear time series,'' \emph{Journal of the American Statistical
  Association}, vol.~99, no. 465, pp. 156--168, Mar. 2004.

\bibitem{BriersDM:2010}
M.~Briers, A.~Doucet, and S.~Maskell, ``Smoothing algorithms for state-space
  models,'' \emph{Annals of the Institute of Statistical Mathematics}, vol.~62,
  no.~1, pp. 61--89, Feb. 2010.

\bibitem{DoucGMO:2011}
R.~Douc, A.~Garivier, E.~Moulines, and J.~Olsson, ``Sequential {M}onte {C}arlo
  smoothing for general state space hidden {M}arkov models,'' \emph{Annals of
  Applied Probability}, vol.~21, no.~6, pp. 2109--2145, 2011.

\bibitem{FearnheadWT:2010}
P.~Fearnhead, D.~Wyncoll, and J.~Tawn, ``A sequential smoothing algorithm with
  linear computational cost,'' \emph{Biometrika}, vol.~97, no.~2, pp. 447--464,
  2010.

\bibitem{LindstenS:2012}
F.~Lindsten and T.~B. Sch\"{o}n, ``On the use of backward simulation in the
  particle {G}ibbs sampler,'' in \emph{Proceedings of the 2012 {IEEE}
  International Conference on Acoustics, Speech and Signal Processing
  ({ICASSP})}, Kyoto, Japan, Mar. 2012.

\bibitem{LindstenSJ:2012}
F.~Lindsten, T.~B. Sch\"on, and M.~I. Jordan, ``A semiparametric {B}ayesian
  approach to {W}iener system identification,'' in \emph{Proceedings of the
  16th {IFAC} Symposium on System Identification ({SYSID}) (accepted for
  publication)}, Brussels, Belgium, 2012.

\bibitem{WhiteleyAD:2010}
N.~Whiteley, C.~Andrieu, and A.~Doucet, ``Efficient {B}ayesian inference for
  switching state-space models using discrete particle {M}arkov chain {M}onte
  {C}arlo methods,'' Bristol Statistics Research Report 10:04, Tech. Rep.,
  2010.

\bibitem{OlssonR:2011}
J.~Olsson and T.~Ryd\'en, ``{R}ao-{B}lackwellization of particle {M}arkov chain
  {M}onte {C}arlo methods using forward filtering backward sampling,''
  \emph{{IEEE} Transactions on Signal Processing}, vol.~59, no.~10, pp.
  4606--4619, 2011.

\bibitem{PittS:1999}
M.~K. Pitt and N.~Shephard, ``Filtering via simulation: Auxiliary particle
  filters,'' \emph{Journal of the American Statistical Association}, vol.~94,
  no. 446, pp. 590--599, 1999.

\bibitem{DoucetGW:2000}
A.~Doucet, S.~J. Godsill, and M.~West, ``Monte {C}arlo filtering and smoothing
  with application to time-varying spectral estimation,'' in \emph{Proceedings
  of the 2000 {IEEE} International Conference on Acoustics, Speech and Signal
  Processing ({ICASSP})}, Istanbul, Turkey, Jun. 2000.

\bibitem{GilksBT:1995}
W.~R. Gilks, N.~G. Best, and K.~K.~C. Tan, ``Adaptive rejection {M}etropolis
  sampling within {G}ibbs sampling,'' \emph{Journal of the Royal Statistical
  Society: Series {C}}, vol.~44, no.~4, pp. 455--472, 1995.

\bibitem{Tierney:1994}
L.~Tierney, ``{M}arkov chains for exploring posterior distributions,''
  \emph{The Annals of Statistics}, vol.~22, no.~4, pp. 1701--1728, 1994.

\bibitem{JohannesPY:2010}
M.~Johannes, N.~Polson, and S.~M. Yae, ``Discussion on {P}article {M}arkov
  chain {M}onte {C}arlo methods,'' Journal of the Royal Statistical Society:
  Series {B}, 72, p 324--326, 2010.

\bibitem{GinsbergMPBSB:2009}
J.~Ginsberg, M.~H. Mohebbi, R.~S. Patel, L.~Brammer, M.~S. Smolinski, and
  L.~Brilliant, ``Detecting influenza epidemics using search engine query
  data,'' \emph{Nature}, vol. 457, pp. 1012--1014, 2009.

\bibitem{KeelingR:2007}
M.~Keeling and P.~Rohani, \emph{Modeling Infectious Diseases in Humans and
  Animals}.\hskip 1em plus 0.5em minus 0.4em\relax Princeton University Press,
  2007.

\bibitem{RasmussenRK:2011}
D.~A. Rasmussen, O.~Ratmann, and K.~Koelle, ``Inference for nonlinear
  epidemiological models using genealogies and time series,'' \emph{PLoS Comput
  Biology}, vol.~7, no.~8, 2011.

\bibitem{SkvortsovR:2011}
A.~Skvortsov and B.~Ristic, ``Monitoring and prediction of an epidemic outbreak
  using syndromic observations,'' arXiv.org, arXiv:1110.4696v1, Oct. 2011.

\bibitem{LiuWK:1994}
J.~S. Liu, W.~H. Wong, and A.~Kong, ``Covariance structure of the {G}ibbs
  sampler with applications to the comparisons of estimators and augmentation
  schemes,'' \emph{Biometrika}, vol.~81, no.~1, pp. 27--40, 1994.

\bibitem{AthreyaDS:1996}
K.~B. Athreya, H.~Doss, and J.~Sethuraman, ``On the convergence of the {M}arkov
  chain simulation method,'' \emph{The Annals of Statistics}, vol.~24, no.~1,
  pp. 69--100, 1996.

\end{thebibliography}

\end{document}